   \theoremstyle{plain}
 \newtheorem{lem}{Lemma}
  \newcommand{\e}{\mathrm{e}}
\newcommand{\ot}{\otimes}
\newcommand{\I}{\mathbb{I}} 
\newcommand{\D}{\ensuremath{\mathcal{D}}}
\newcommand{\UU}{\ensuremath{\mathcal{U}}} 
\renewcommand{\H}{\mathcal{H}} 
\newcommand{\V}{\mathcal{V}} 
\newcommand{\PPP}{\mathbf{P}}
\newcommand{\MMM}{\mathbf{M}}
\DeclareMathOperator{\QFI}{\mathnormal{F_Q}\!} %QFI
\DeclareMathOperator{\FI}{\mathnormal{F}_\mathrm{cl}} %FI
\DeclareMathOperator{\tr}{tr} %trace
\newcommand{\ii}{\mathrm{i}} %imaginary unit
\DeclareMathOperator{\param}{\varphi} 
\DeclareMathOperator{\state}{\varrho} 
\newcolumntype{M}[1]{>{\centering\arraybackslash}m{#1}}
\newcolumntype{N}{@{}m{0pt}@{}}
 \definecolor{BLACK}{gray}{0}
 \definecolor{WHITE}{gray}{1}
 \definecolor{RED}{rgb}{1,0,0}
 \definecolor{GREEN}{rgb}{0,1,0}
 \definecolor{BLUE}{rgb}{0,0,1}
 \definecolor{CYAN}{cmyk}{1,0,0,0}
 \definecolor{MAGENTA}{cmyk}{0,1,0,0}
 \definecolor{YELLOW}{cmyk}{0,0,1,0}
\begin{document}
\global\long\global\long\global\long\def\bra#1{\mbox{\ensuremath{\langle#1|}}}
\global\long\global\long\global\long\def\ket#1{\mbox{\ensuremath{|#1\rangle}}}
\global\long\global\long\global\long\def\bk#1#2{\mbox{\ensuremath{\ensuremath{\langle#1|#2\rangle}}}}
\global\long\global\long\global\long\def\kb#1#2{\mbox{\ensuremath{\ensuremath{\ensuremath{|#1\rangle\!\langle#2|}}}}}

\global\long\global\long\global\long\def\SET#1#2{\mbox{\ensuremath{\ensuremath{\left\lbrace\left. #1\ \right|\ #2 \right\rbrace }}}}

\newcommand{\id}{\openone}
\newcommand{\NN}{\ensuremath{\mathcal{N}}}
\newcommand{\PP}{\ensuremath{\mathcal{P}}}
\newcommand{\GHZ}{\ensuremath{\mathrm{GHZ}}}

\newcommand{\ODF}{\ensuremath{\mathrm{ODF}}}
\newcommand{\DFS}{\ensuremath{\mathrm{DFS}}}
\newcommand{\braket}[1]{\ensuremath{\langle #1\rangle}}

\title{Estimation of gradients in quantum metrology}
\author{Sanah Altenburg}
\thanks{These two authors contributed equally to this work.}
\affiliation{Naturwissenschaftlich-Technische Fakult\"at,
Universit\"at Siegen,
Walter-Flex-Str.~3,
D-57072 Siegen}

\author{Micha\l\ Oszmaniec}
\thanks{These two authors contributed equally to this work.}
\affiliation{ICFO-Institut de Ciencies Fotoniques, The Barcelona Institute of Science and Technology, 08860 Castelldefels (Barcelona), Spain}
\affiliation{National Quantum Information Centre of Gda\'nsk, 81-824 Sopot, Poland}
\affiliation{Faculty of Mathematics, Physics and Informatics, 
	Institute of Theoretical Physics and Astrophysics, University of
	Gda\'nsk, 80-952 Gdañsk, Poland}

\author{Sabine W\"olk}
\author{Otfried G\"uhne}
\affiliation{Naturwissenschaftlich-Technische Fakult\"at,
Universit\"at Siegen,
Walter-Flex-Str.~3,
D-57072 Siegen}
%%%%%%%%%%%%%%%%%%%%%%%%%%%%%%%%%%%%%%%%%%%%%%%%%%%%%%%%%%%%%%%%%%%

\date{\today}

%%%%%%%%%%%%%%%%%%%%%%%%%%%%%%%%%%%%%%%%%%%%%%%%%%%%%%%%%%%%%%%%%%%

\begin{abstract}
We develop a general theory to estimate magnetic field gradients in
quantum metrology.  We consider a system of $N$ particles distributed 
on a line whose internal degrees of freedom interact with a magnetic 
field.  Usually gradient estimation is based on precise measurements 
of the magnetic field at two different locations, performed with two 
independent groups of particles. This approach, however, is sensitive 
to fluctuations of the off-set field determining the level-splitting 
of the particles and results in collective dephasing. In this work we use the framework of quantum metrology to assess the maximal accuracy for gradient estimation.  For arbitrary positioning of particles, we identify optimal entangled and separable states  allowing the estimation of gradients with the maximal accuracy, quantified by 
the quantum Fisher information.   We also analyze the performance of states from the decoherence-free subspace (DFS),  which are insensitive to the fluctuations of the magnetic offset field. We find that these states allow  to measure 
 a gradient directly, without the necessity of estimating the magnetic offset field.  Moreover, we show that  DFS states attain a precision for gradient estimation  comparable to the optimal entangled states. Finally, for the above classes of states  we find simple and feasible measurements saturating the quantum Cram\'{e}r-Rao bound.
\end{abstract}

\maketitle
\section{Introduction}

Quantum metrology holds the promise to enhance the measurement of physical 
quantities with the help of quantum effects \cite{Giovannetti2006, Toth2014}. 
In practice, ideas from quantum metrology may improve gravitational wave 
detectors \cite{schnabel}, imaging in biology \cite{biology} or sensors
for protein molecules \cite{jelezko}. In a typical metrological scenario 
one aims to estimate a certain phase $\varphi$, e.g., generated by a 
magnetic field, with quantum probe systems. By using entanglement between 
the probes, the uncertainty $\Delta^2\tilde{\varphi}$ of the estimate  can be
reduced \cite{Giovannetti2006}. In this way, quantum metrology offers 
an advantage in theory, but for practical implementations noise and 
decoherence have to be taken into account. Here, it has been shown 
that noise has often a negative effect and the improved scaling gets 
lost \cite{Escher2011, Demkowicz-Dobrzanski2012}. Nevertheless, concepts
such as differential metrology, where some probe systems are used to monitor
the noise, can be used to maintain a quantum advantage \cite{Landini2014, 
Altenburg2016}.

A different problem is the estimation of the gradient of a spatially 
distributed magnetic field \cite{othergradientpapers, inigo, Schmidt-Kaler2012a}. Of course, 
one may just measure the field at different positions \cite{Snadden1998} or move a single probe through the field \cite{Walther2011,Keenan2012}, and then compute 
the gradient. But these are not necessarily the optimal strategies, especially 
in cases where one aims to measure small fluctuations of a large offset 
field. Then, a detection of magnetic fields with high precision and spatial resolution is often not possible \cite{ Wildermuth2005}. 
Furthermore, techniques to measure spatial varying fields by probes with well known position can be reversed in order to measure the  spatial distributions of probes \cite{MRIpapers} or the spatial distribution of entanglement \cite{Woelk2016a}
 by a well known spatial field distribution. For example in magnetic resonance imaging (MRI) the spatial resolution of such an image depends on the strength of the applied magnetic field gradient and the calibration of this gradient.
 The larger the magnetic field gradient, the better the resolution. 
 However, practically the resolution is limited by the patients, e.g., for patients with medical implants. An old cardiac pacemaker or a cochlea implant would make the application of a large magnetic field gradient and therefore a high spatial resolution MRI impossible. 
A precise calibration of the applied spatial field distribution is necessary.
 Here, quantum metrology could offer a solution for a precise calibration. With the here presented findings it will be possible to calibrate a gradient for MRI with high precision. 

 In this paper we discuss the estimation of magnetic gradients
using the language of quantum metrology. We consider $N$ particles distributed
in an arbitrary but fixed manner along a line, and ask in which quantum
state they have to be prepared and which measurements have to be carried
out in order to estimate the magnetic field gradient with the optimal
precision. We also consider the case of collective dephasing noise, as it occurs in 
realistic set-ups with trapped ions \cite{Monz2011} or neutral atoms in optical microtraps. We arrive at a general
scheme with optimal states and measurements, depending on the knowledge
of the offset field, or on the presence of noise. 

This paper is organized as follows: In Section II we explain basic facts 
about quantum metrology and the Fisher information, being the central 
figure of merit in estimation scenarios. In Section III we introduce the
scenario of gradient estimation. Section IV deals with the case that the
offset field $B_0$ at a certain position is known and the gradient should
be estimated. In this part we also consider the effect of collective phase noise on the performance of gradient estimation. Section V considers the situation, where the offset field
$B_0$ is not known. Section VI discusses shortly the measurement of more
general notions than the gradient of the field. Finally, we conclude and 
discuss the optimal strategies.

\section{Quantum metrology}\label{sec:estimation_theory}
\begin{figure*}
\begin{center}
\subfigure[ ]{\includegraphics[width=0.4\textwidth]{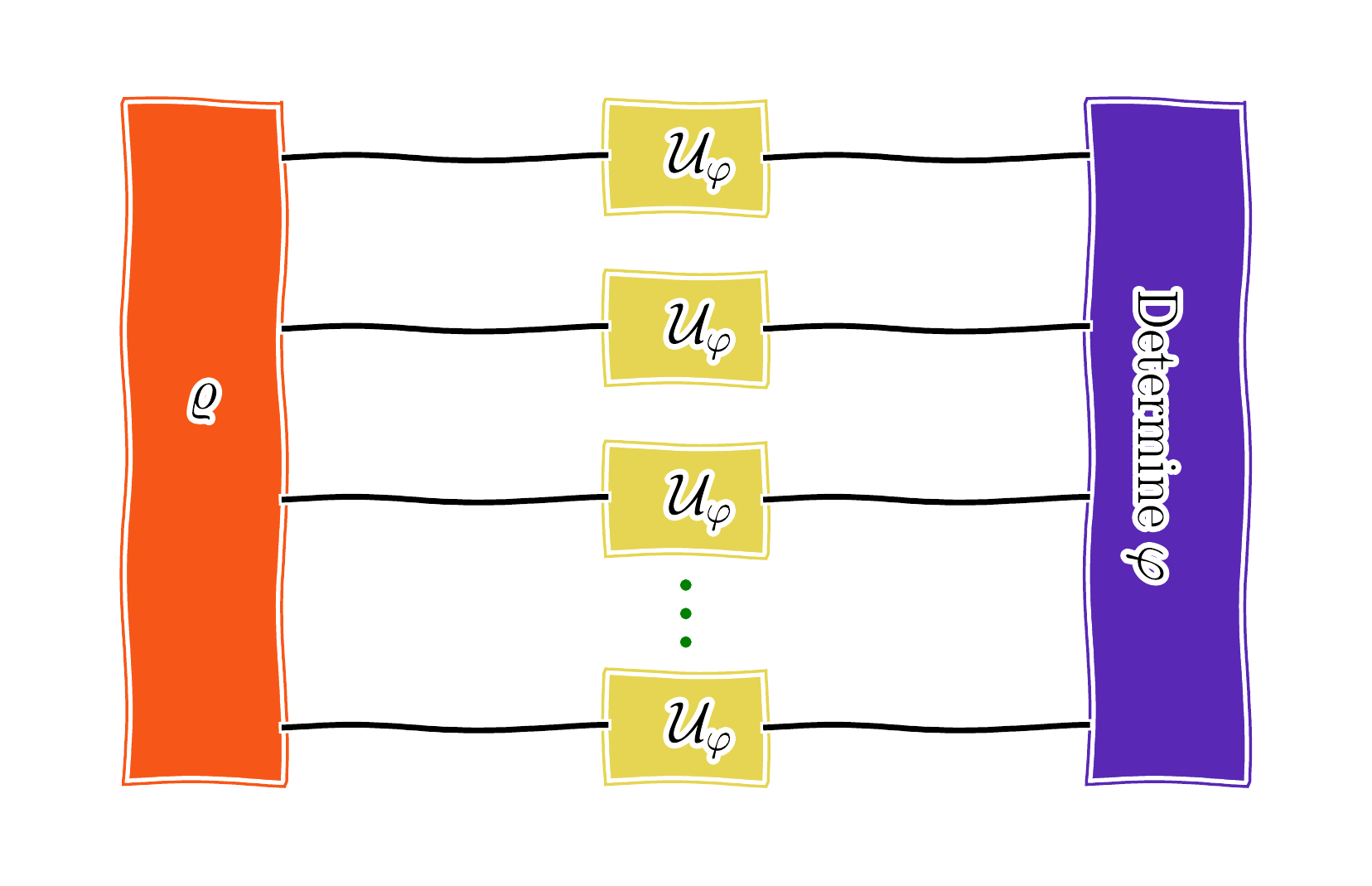}}%\hfill
\subfigure[ ]{\includegraphics[width=0.4\textwidth]{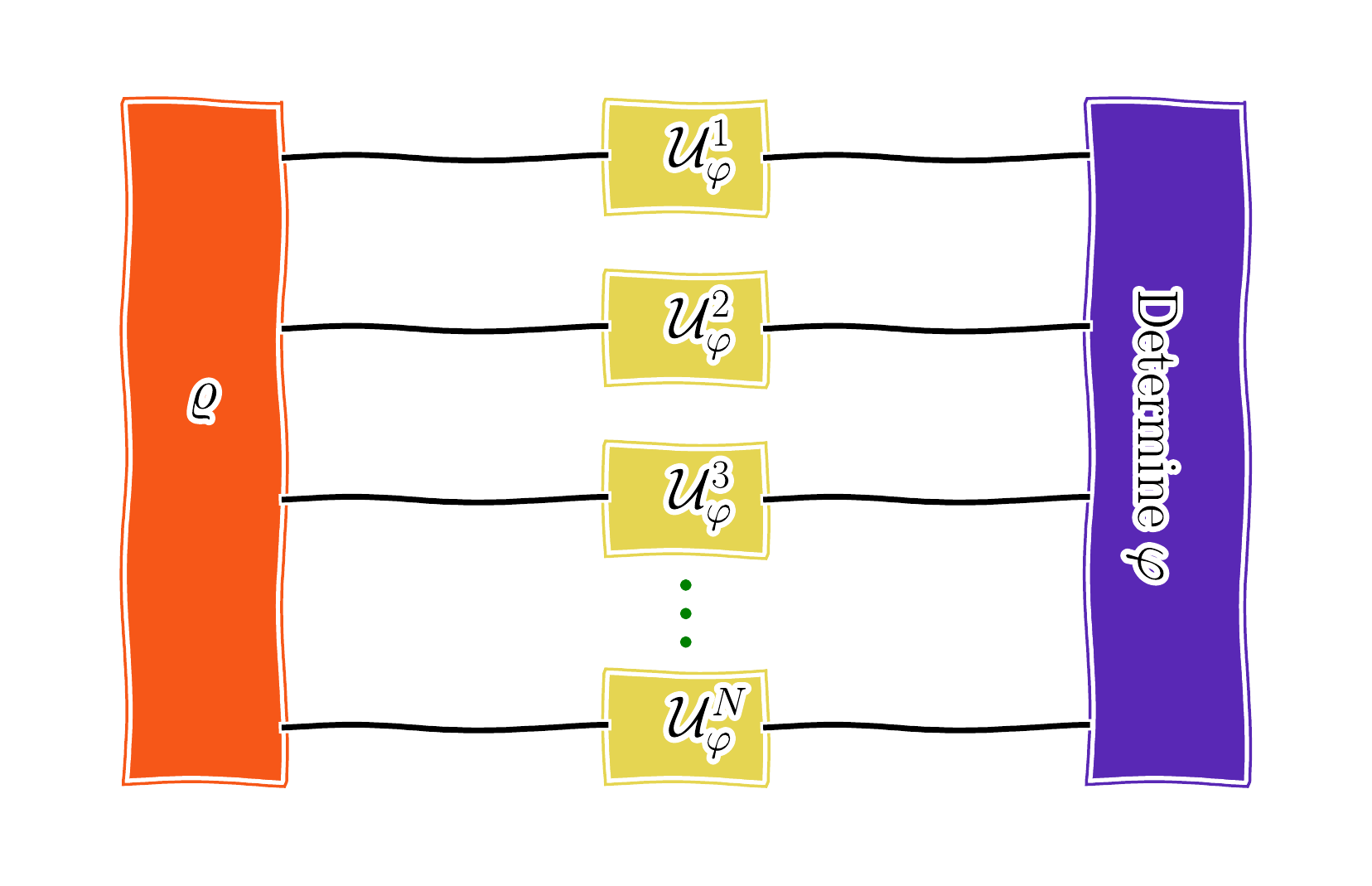}}
\end{center}
\caption{Comparison of two metrological scenarios (in the absence of experimental noise). A quantum device encodes the parameter  $\param$ by acting on the multiparticle initial state $\state$. The parameter is estimated by multiple measurements of the output state. \textbf{(a)}: The parameter $\param$ is encoded by the application of unitaries  $\UU_{\param}$ acting in parallel on every particle. \textbf{(b)}   The parameter $\param$ is encoded by the application of unitaries $\UU^{(i)}_{\param}$ applied in parallel, but  acting differently on each particle. This scenario is relevant for the gradient estimation.}
\label{fig:shemes}
\end{figure*}

We first review the basics of quantum metrology \cite{Giovannetti2006} and introduce the main technical tools and notation that will be used in our work. Then we briefly present the canonical phase estimation scheme and compare it with gradient estimation that is studied in this work.

The task in a typical quantum metrology scheme is to determine an unknown parameter $\param$ which is encoded in a quantum state $\state_{\param}$ by a quantum channel described via the map $\Lambda_{\param}$.
After passing through the quantum channel, the state is subsequently measured and the whole process repeated $\nu$ times to gather the sufficient statistics.

Let $p_{j}(\param)$ be the probability for the measurement outcome $j$, given that the initial state was $\state$ and the unknown parameter was $\param$. Then a  result in classical statistics states that the variance $\Delta^2\tilde\param$ of any unbiased and consistent estimator $\tilde\param$ of $\param$ is 
lower-bounded by the \emph{Cram\'{e}r-Rao Bound} (CRB) \cite{ClassStat}: 
\begin{equation} \label{eq:CRB}
\Delta^2\tilde\param
\geq
\frac{1}{\nu\FI\left(\left\{p_{j}(\param)\right\}\right)},
\end{equation}
where
\begin{equation} \label{eq:FI}
\FI\left(\left\{p_{j}(\param)\right\}\right) \coloneqq \sum_j \frac{\left[ \partial_\varphi p_j(\varphi) \right]^2}{p_j(\varphi)} 
\end{equation}
is the \emph{classical Fisher information} (FI). A single estimator saturating Eq.~\eqref{eq:CRB} may not always exist for the whole parameter range. When estimating small fluctuations of a parameter around a given value, the CRB in Eq.~\eqref{eq:CRB} is guaranteed to be tight in the limit of large number of repetitions  $\nu$  \cite{ClassStat}.

If the probabilities $p_j(\param)$ come from a quantum mechanical experiment, the classical Fisher information (FI) $\FI$ depends on the initial state $\varrho$, the map $\Lambda_{\param}$ encoding the phase
\begin{equation}
\Lambda_{\param}: \state \xrightarrow{} \varrho_\varphi\coloneqq\Lambda_{\varphi}(\state)
\end{equation}
 and the performed measurement. In quantum mechanics the measurement process is described by a Positive Operator-Valued Measure (POVM), i.e. a collection $\MMM=\{M_j\}$ of positive semi definite operators satisfying the normalization condition $\sum_j M_j=\id$. The probability of measuring the outcome $j$ on the state $\state_{\param}$ is given by
\begin{equation}\label{eq:ClassicalFisher}
p_j(\param)=\mathrm{Tr}\left[M_j \Lambda_{\varphi}(\varrho)\right].
\end{equation}
In the following we will denote the classical Fisher information for the measurement statistics obtained from $\state_{\param}$ by the POVM $\MMM$ by $\FI(\rho_{\param},\MMM)$. 
The optimization of $\FI(\rho_{\param},\MMM)$ over all possible POVMs is called \emph{Quantum Fisher Information} $\QFI[\rho, \Lambda_{\param}]$ (QFI) \cite{Caves1994}.
The QFI depends solely on the quantum state $\state$ and $\Lambda_{\param}$, whereas the FI depends on the state $\state$, $\Lambda_{\param}$ and the POVM  $\MMM$. The QFI operationally quantifies the metrological usefulness of the initial state $\state$ under the map $\Lambda_{\param}$ for the estimation of $\param$. The precision limitations for estimating $\param$ is usually put in the from of the \emph{quantum Cram\'{e}r-Rao Bound} 
\begin{equation}\label{eq:Cramér_rao2}
\Delta^2 \tilde{\param} \ge \frac{1}{\nu \QFI[\state, \Lambda_{\varphi}]}\ .
\end{equation}
The QFI $\QFI$ can be computed explicitly via the formula \cite{Toth2014,RDD2015}, 
\begin{equation} \label{eq:QFI}
\QFI[\state, \Lambda_{\varphi}]=2\sum_{\alpha,\beta:\lambda_\alpha +\lambda_\beta \neq 0} \frac{|\bk{\alpha|\partial_\varphi \Lambda_{\param}(\state)}{\beta}|^2}{\lambda_\alpha + \lambda_\beta}\,,
\end{equation}
where  $\{\lambda_\alpha\}$ are the eigenvalues and  $\{\ket{\alpha}\}$ the eigenvectors of $\Lambda_{\param}(\state)$. If the parameter $\param$ is encoded via an unitary evolution, i.e. when $\Lambda_{\param}(\state)=U_{\param} \state U^{\dagger}_{\param}$, where $U_{\param}=\exp(-\ii \param  H)$   for some Hermitian operator $H$, then the QFI depends only on the initial state $\state$ and the operator $H$ and will be denoted by $\QFI\left[\state,H\right]$. The QFI for pure states $\psi\coloneqq \kb{\psi}{\psi}$  in unitary time evolutions is related to the variance of the operator $H$,
\begin{equation}
\QFI\left[\psi,H\right]=4 \Delta^{2}_\psi H \coloneqq 4\left[\tr\left(H^2 \psi \right) - \tr\left(H \psi \right)^2   \right]\ . 
\end{equation}
 Let us also recall that $\QFI\left[\state,H\right]$ is a convex function of $\state$. For this reason the maximal value of QFI is  always attained for pure states. In fact, for a fixed Hermitian operator $H$, the maximal $\QFI$ can be computed explicitly by \cite{Giovannetti2006}
\begin{equation}\label{eq:maxfisch}
\max_{\state\in\D(\H)} \QFI\left[\state,H\right]=\left(\lambda_{\max}-\lambda_{\min}\right)^{2}\ ,
\end{equation}
where $\lambda_{\max}$ and $\lambda_{\min}$ are the
maximal and   minimal eigenvalues of $H$ respectively, and $\D(\H)$ denotes the set of (mixed and pure) quantum states supported on the Hilbert space $\H$. The pure
state for which the QFI attains Eq.~\eqref{eq:maxfisch} is given by 
\begin{equation}\label{eq:optSTATE}
\ket{\psi_{\mathrm{opt}}}=\frac{1}{\sqrt{2}}\left(\ket{\max}+\ket{\min}\right)\ ,
\end{equation}
where $\ket{\max}$ and $\ket{\min}$ are eigenvectors of $H$ corresponding
to eigenvalues $\lambda_{\max}$ and respectively $\lambda_{\min}$.

In the experimental context it is common to infer the value of the parameter solely from the expectation value $\langle M \rangle_{\param}\coloneqq \tr (\state_{\param}M)$ of some observable $M$. This is done by using the Taylor expansion 
\begin{equation}\label{eq:approximation}
\tilde{\param}_M - \param_0 \approx  \frac{\langle M \rangle_{\param} - \langle M \rangle_{\param_0}}{ \left. \partial_{\param} \langle M \rangle\right|_{ {\param}_0} }\ ,
\end{equation}
to construct the estimator $\tilde{\param}_M$ \footnote{More specifically, in order to construct the estimator $\tilde{\param}_M$, one has to assume that the statistical fluctuations of the phase $\param$ around the \emph{known} value $\param_0$ are small (so that Eq.~\eqref{eq:approximation} makes sense) and that the expectation value \unexpanded{$\langle M\rangle_{\param_0}$} is known.} 
 of the value of $\param$. This strategy is in general only optimal for a specific choice \footnote{Note, however, that locally (in the neighborhood of the specific value $\param_0$) the precision attainable by this method saturates the quantum Cram\'{e}r-Rao bound given in Eq.~\eqref{eq:Cramér_rao2}, for the suitable choice of the observable. The optimal observable in general depends on the value of the phase. In particular, it is known that the error propagation formula in Eq.~\eqref{eq:error_propagation_formula} for the so-called symmetric logarithmic derivative \cite{Toth2014} saturates Eq.~\eqref{eq:Cramér_rao2}. However, in general there is no guarantee that symmetric logarithmic derivative is an observable easily accessible in an experiment.}
  of the operator $M$.   The precision of this estimator  $\Delta^2 \tilde{\param}_{ M}$, after the experiment is repeated $\nu$ times, is given by
the error-propagation formula \cite{Toth2014}

\begin{equation}\label{eq:error_propagation_formula}
\Delta^2 \tilde{\param}_{M}=\frac{\left(\Delta^2_{\varrho_{\varphi_0}} M\right)^2}{\nu \left[\left. \partial_{\param} \langle M \rangle \right|_{ {\param}_0}\right]^2} \ .
\end{equation}
In what follows, we will drop the number of repetitions $\nu$ in order to simplify the notation and discussion.%Also, we will  stop explicitly differentiating between the estimator ($\tilde{\param}$) and the actual value ($\param$) of the parameter to-be estimated.  
%where $\langle \hat{M}\rangle$ is the expectation value of the observable $\hat{M}$ and $(\Delta_{\varrho_\varphi}\hat{M})^2=\mathrm{Tr}\left[\hat{M}^2\varrho_\varphi\right]-\mathrm{Tr}\left[\hat{M}\varrho_\varphi\right]^2$ the variance of the observable $\hat{M}$ in the state $\varrho_\varphi=\Lambda_\varphi(\varrho_0)$.
\newline
\textbf{Standard metrological scenario:} In the standard scenario a quantum device (e.g., an interferometer) acts on a single particle (photon, atom, etc.) with the Hamiltonian $h_0$ (often taken to be equal to $\frac{1}{2}\sigma_z$).
The device  encodes the unknown parameter $\param$ on the system of $N$ particles by performing the parallel unitary  transformation $U_{\param} = \UU_{\param}^{\ot N}$, where   $\UU_{\param}=\e^{-\ii h_0 \param}$ [see Figure  \ref{fig:shemes}(a)]. This unitary evolution is generated by the global Hamiltonian  $
 H = \sum_{i}^N h_0^{(i)}$, where $h_0^{(i)}$ denotes the Hamiltonian on the $i$th particle. In classical measurement strategies (corresponding to separable input sates),  the particles are only classical correlated and the variance $\Delta^2\tilde{\param}$ for measuring $\varphi$ is limited by the number of particles $N$ via the standard quantum limit (SQL) $\Delta^2\tilde{\param}\propto 1/N$. However, in quantum mechanics we have the freedom to apply the device in parallel to an entangled state of $N$ particles [see Figure \ref{fig:shemes} (a)].  This allows to obtain the accuracy  $\Delta^2\tilde{\param}\propto 1/N^2$, which is usually referred as the Heisenberg limit (HL).
 
 The concepts of SQL and HL are tailored to the schemes where every particle is affected by the same unitary $\UU_{\param}$. As we will see later, in the context of the estimation of gradients of electric or magnetic fields, it is natural to consider again parallel  encoding, but this time allowing single particle unitaries $U^{i}_{\param}$ acting differently on different particles - see   Figure \ref{fig:shemes} (b). The standard HL and the SQL are no longer valid and new bounds in precision have to be derived. This is one of the main aim of the present paper.

At this point, it is important to remark that in the standard metrological scenario, the Heisenberg scaling is typically destroyed by local noise and asymptotically only an enhancement by a constant factor can be achieved \cite{Demkowicz-Dobrzanski2012,Escher2011}. However, in the case of global noise such as collective phase noise the scaling $\Delta^2 \tilde{\param}\propto 1/ N^2$  can be restored, e.g., by differential interferometry \cite{Landini2014, Altenburg2016}. 
In this work we will also discuss the impact of collective phase noise on the performance for gradient estimation.

\section{Set-up for gradient estimation}\label{sec:gradient_estimation_theory}
\begin{figure}
\begin{center}
\includegraphics[width=.45\textwidth]{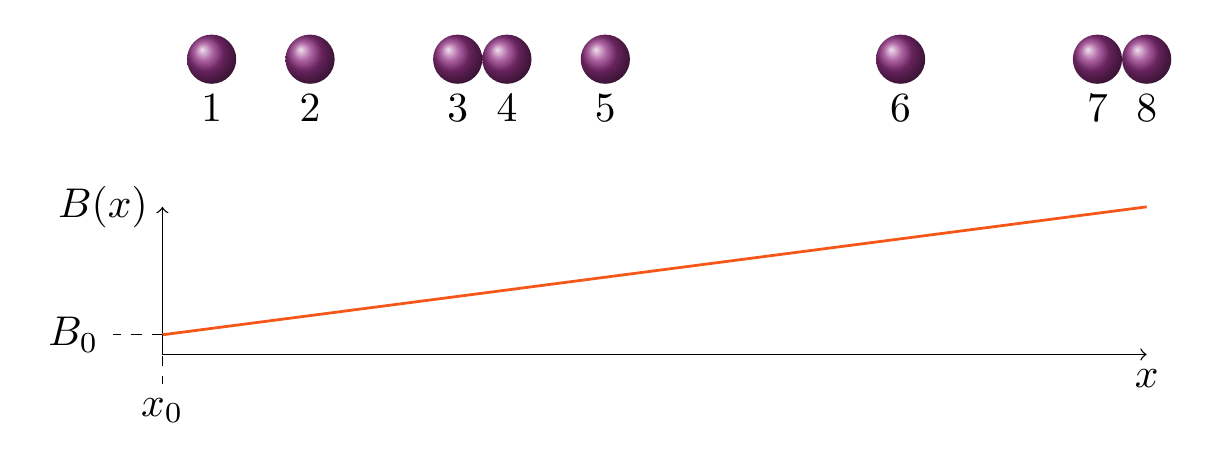}
\end{center}
\caption{A string of particles in a magnetic field with a spatial gradient in the $z$ component along the string. The magnetic field $B(x):=B(x) \vec{e}_z$ acts on each particle, depending on its position. The particles are labeled such that the smallest magnetic field acts on the first and the largest magnetic field on the last particle.
}\label{fig:particles_in_gradient}
\end{figure}
Throughout this paper we consider a string of $N$ particles whose internal, qubit-like, degrees of freedom are coupled to the $z$ component of the spatially-varying magnetic field $B(x):=B(x) \vec{e}_z$ with
\begin{equation}\label{eq:MAGfield}
B(x)=B_0 + (x-x_0)G,
\end{equation}
with $B_0:=B(x_0)$ being the field at position $x_0$, called offset field, and $G$ being the strength of the gradient. Usually in experiments the offset field $B_0$ is set to split the degenerate energetic levels. The direction of the offset field $B_0$ is defined to be the quantization axis that is called $z$ axis. Furthermore, the offset field $B_0$ is strong comparing to fields that point in other directions.  Therefore, we neglect all other components.
We assume without loss of generality a spatial gradient in the $x$ direction. The particles are arranged along the $x$ axis and labeled in such a way that $B(x_i) \le B(x_{i+1})$, where $x_i$ with $i \in \{1,\cdots, N\}$ denotes the position of the $i$th qubit - see Fig.~\ref{fig:particles_in_gradient}.

The magnetic field $B(x)$ depends on the positions $x_i$ of the qubits. In our analysis we will focus on experiments, where these positions can be measured with high precision. This is the case, e.g., in experiments with trapped ions \cite{Monz2011,Leibfried2004} or neutral atoms in optical microtraps \cite{Schlosser2011,Bloch2008}. In both kinds of experiments the position of the qubits can be measured up to $\sim\,$nm, whereas the distance of the qubits scales with $\sim\,\mu$m. Generally, position dependent fields such as magnetic gradients lead to a coupling between the internal and external degrees of freedom \cite{Mintert2001,Woelk2016}. Within this paper, we will assume that this coupling is negligible small, such that the position $x_i$ of the $i$th qubit can be treated classically. This can be always achieved by, e.g., trapping the particles strong enough. 
 
 The Hamiltonian describing the interaction of the internal degrees of freedom with the magnetic field is given by $H=\hbar \gamma H_0 + \hbar \gamma G H_G$, with
\begin{equation}\label{eq:hamiltonian}
H_0 \coloneqq B_0 J_z\ ,\  H_G\coloneqq \frac{1}{2} \sum_{i=1}^N (x_i -x_0)\sigma_z^{(i)}, 
\end{equation}
 where $\sigma_z^{(i)}$ denotes  the Pauli matrix  acting on the $i$th qubit,  $J_z=\frac{1}{2}\sum \sigma_z^{(i)}$ is the collective spin operator, and $\gamma$ is the coupling strength. The map $\Lambda_G$ describing the unitary time evolution due to $H$ for time $t$ is given by $\Lambda_G(\state)=U_G\state U_G^\dagger$, where
\begin{equation}\label{eq:timeevolution}
U_G\coloneqq\mathrm{exp}\left[-\ii  \gamma B_0 t J_z - \ii \gamma G t \sum_{i=1}^N (x_i-x_0) \frac{\sigma_z^{(i)}}{2}\right]. 
\end{equation}

 In the following, we will use tools of quantum metrology to derive limits in precision for a classical and quantum enhanced estimation of the gradient $G$, analogous to SQL and HL known from the standard phase estimation scheme depicted in Fig.~\ref{fig:shemes}(a). 
 The maximal achievable precision of $G$ depends on the knowledge about the magnetic offset field $B_0$. 
In general, an experimenter could measure the offset field with some uncertainty $\Delta^2 \tilde{B}_0$ first and would have some \textit{a priori} knowledge about the offset field 
 before estimating the gradient $G$. 
However, throughout this paper we focus on two extremal situations: full and no \textit{a priori} knowledge about $B_0$. The first scenario is coverd in Section \ref{sec:full_a_priori_knowledge} and allows to study the \glqq{}clean\grqq{} situation, where the only parameter to-be-estimated is the gradient of the field and leads to the ultimate bounds in precision. The second scenario is described in Section \ref{sec:no_a_priori_knowledge} and applies to two interconnected  cases where either (i) an experimenter has no access to the reference frame associated to the rotations generated by the offset field or (ii) the system is strongly affected by the action of collective phase noise. 

As we discuss in Section \ref{sec:estimatin_with_noise} collective phase noise is the main source of noise in setups of trapped ions or trapped atoms in an optical micro-trap.  It leads to an effective erasure of the information about the offset field $B_0$. In Section~\ref{sec:compare_strategies} we will argue that in these experimental scenarios  the precision in gradient estimation does not gain much from the measurements  of the offset field or having a partial knowledge about it. Hence, we  \emph{a fortiori} justify why we focused on the two extreme cases of full and no \emph{a priori} knowledge about $B_0$. Other experimental scenarios  may require a more refined analysis of the problem such as multiparameter estimation \cite{Ragy2016,Proctor2017} and systematically taking into account the lack of knowledge about $B_0$  \cite{OpDeph2013}. 
\newline
Please note that in the main text of the paper we will make two implicit assumptions on the system that we are considering. First, we will assume that $x_i\geq x_0$ for all particles. This affects the precise form of the optimal states and the formula for the maximal QFI. For the sake of simplicity of the presentation we decided to discuss this in detail in Appendix~\ref{app:maxFI}. Second, we will assume that the number of particles $N$ is even. This has only a slight effect on the form of the results for the case of no \emph{a priori} knowledge about $B_0$. The change  for odd $N$ is that the summation range has to be changed from $N/2$ to $\lfloor N/2 \rfloor$. These are discussed in full generality  in  Appendix~\ref{app:presicion_bounds_not_knowing_b}.

\section{Gradient estimation with full \textit{a priori} knowledge about $B_0$}
\label{sec:full_a_priori_knowledge}
 Having the full \textit{a priori} knowledge about the offset field $B_0$ amounts to treating it as a fixed constant. Using the commutation relations 
\begin{equation}
\left[J_z, H_G \right]=\left[U_G,H_G \right]=0\ .
\end{equation}
and Eq.~\eqref{eq:QFI} we obtain the following relation
\begin{equation}\label{eq:simpl QFI}
\QFI\left[\state,\Lambda_G\right]=(\gamma t)^2  \QFI\left[\state,H_G\right] \ .
\end{equation}
In what follows we will reserve the notation
\begin{equation}\label{eq:simplifiedNOT}
F_Q(\state)\coloneqq \QFI\left[\state,\Lambda_G \right]
\end{equation}  
to avoid ambiguity and simplify the notation.  The physical meaning of Eq.~\eqref{eq:simpl QFI} is that the QFI for gradient estimation is reduced to the QFI for the \glqq{}standard Hamiltonian\grqq{} $\QFI\left[\state,H_G\right]$, and that $\QFI\left[\state,\Lambda_G\right]$ does not depend on the value of  the magnetic offset field $B_0$ at $x_0$. However, the QFI does depend on $x_0$, via the dependence of $H_G$ on this parameter - see Eq.~\eqref{eq:hamiltonian}. Notice that the unitary transformation generated by the field has a  product structure $U_G=\otimes_{i=1}^{N}\UU^{i}_G$, where $\UU^{i}_{G}=\exp\{-\ii \gamma t [B_0 + G (x_i -x_0)] \sigma^{(i)}_z /2\}$ is a single qubit unitary. Therefore, the problem of deriving the maximal QFI and the optimal state becomes mathematically equivalent to the case of parallel encoding of the phase given in Figure~\ref{fig:shemes} (b).

The rest of this section is organized as follows. First, in Part A we  identify the bounds in precision for the estimation of the gradient $G$ with separable and entangled states. Then, in Part B we give simple, physically-accessible measurements saturating these bound. Finally, in Part C, we  discuss the influence of collective phase noise on the proposed gradient estimation scheme.

\subsection{Bounds on precision for gradient estimation}
Here, we first derive precision bounds for fixed positions $\lbrace x_i\rbrace$ and identify optimal probe states. Then, we discuss the case of linear spacing of particles. Finally, we identify the optimal positioning of qubits and give the ultimate bounds for gradient estimation.
\newline 
\textbf{Separable states:} Our first result concerns the maximal QFI for estimating $G$ using separable states. We start with the observation that from the decomposition
\begin{equation}\label{eq:hamDECOMPOSITION}
H_G = \sum_{i=1}^N h^{(i)}_G\ ,
\end{equation}
with $h^{(i)}_G =(x_i -x_0) \sigma^{(i)}_z /2$, we can simplify the QFI for product states $\state=\bigotimes_{i=1}^N \state_i$ via
\begin{equation}\label{eq:additivity}
\QFI\left[\bigotimes_{i=1}^N \state_i,H_G \right]=\sum_{i=1}^N \QFI\left[\state_i,h^{(i)}_G \right] \ ,
\end{equation}
by using the additivity of the QFI \cite{Toth2014}.
 Using  this relation together with the convexity of QFI and Eq.~\eqref{eq:simpl QFI} we find that the maximum of the QFI on the set of separable states on $N$ qubits $\mathrm{SEP}_N$ is obtained for the product state $\otimes_{i=1}^N \state_i$ such that each $\state_i$ maximizes $\QFI\left[\state_i,h^{(i)}_G \right]$. Therefore, we have 
\begin{equation}\label{eq:supremum separable}
\max_{\state\in\mathrm{SEP}_N}\QFI(\state)=(\gamma t)^2\sum_{i=1}^{N}\left(x_i -x_0\right)^{2}\,.
\end{equation}
and the maximum is obtained for the state 
 \begin{equation}
 \ket{\mathrm{P}}:=\ket{+}^{\otimes N},\,\,\,\mathrm{with}\,\,\,\ket{ +}=\frac{1}{\sqrt{2}}\left(\ket 0+\ket 1\right).\label{eq:def:product}
 \end{equation}
 From Eq.~\eqref{eq:supremum separable} we get the bound in precision for separable input states
 \begin{equation}\label{eq:LimitGSep}
 \Delta^2 \tilde{G}\geq \frac{1}{(\gamma t)^2\sum_{i=1}^{N}\left(x_i -x_0\right)^{2}} \ .
 \end{equation}
\newline
\textbf{Entangled states:} The second result concerns the maximal QFI over all states from the $N$ qubit Hilbert space $\H_N$. To compute this maximum we use  Eq.~\eqref{eq:simpl QFI}, Eq.~\eqref{eq:maxfisch}, and the fact that $H_G$ can be explicitly  diagonalized by the computational basis of the $N$ qubit Hilbert space $\H_N$. We obtain 
\begin{equation}\label{eq:max FisherG}
\max_{\state\in\D(\H_N)}\QFI(\state)=(\gamma t)^2\left[\sum_{i=1}^{N}\left(x_i -x_0\right)\right]^2\ ,
\end{equation}
with the optimal state being the $N$ qubit Greenberger-Horne-Zeilinger (GHZ) state \cite{Greenberger1989}
\begin{equation}
\ket{\GHZ}\coloneqq\frac{1}{\sqrt{2}} \left(\ket{0}^{\otimes N}+\ket{1}^{\otimes N}\right).\label{eq:def:GHZ}
\end{equation}
Analogously to the case of separable states in Eq.~\eqref{eq:LimitGSep} we use the quantum Cram\'{e}r-Rao bound to get the limitations on the precision for estimating $G$ with entangled states
\begin{equation}\label{eq:limitALL}
 \Delta^2 \tilde{G}\geq \frac{1}{(\gamma t)^2\left[\sum_{i=1}^{N}\left(x_i -x_0\right)\right]^{2}} \ .
\end{equation}

Let us remark that both, the maximal QFI in Eq.~\eqref{eq:max FisherG} and the maximal QFI for separable states in Eq.~\eqref{eq:supremum separable} strongly depend on the positioning of the particles and the coordinate $x_0$, if the value of the magnetic offset field $B_0$ is assumed to be known. Notice, however, that the quantum states for which the optimal values are attained do not depend on the spacing of particles. Moreover, the optimal states derived by us are invariant under the relabeling of qubits according to $B(x_i) \ge B(x_{i+1})$, which proves that our scheme works also for a negative value of the gradient $G$. Let us finally remark that in our analysis we have assumed, according to the note in the end of Section~\ref{sec:gradient_estimation_theory}, that $x_i \geq x_0$. The structure of optimal states and the precise formula for maximal QFI changes if this assumption is dropped. We discuss this in detail in Appendix~\ref{app:maxFI}. 
\newline
\textbf{Equidistant spacing:}
Neutral atoms in an optical microtrap are equidistant spaced. We consider an equidistant spacing in the interval $\left[x_0,L+x_0\right]$, i.e. $x_i-\!x_0\!=\!(\!i-\!1\!) \frac{L}{N-1}$ for measuring the gradient $G$ with $N$ qubits. For this positioning the QFI for separable states is given by
\begin{equation} \label{eq:linspec sep}
\max_{\state\in\mathrm{SEP}_N}\QFI(\state)= \frac{(\gamma t L)^2}{6}  \frac{N(2N-1)}{N-1}\, ,
\end{equation}
which (for fixed length $L$) scales proportionally to $N$ for a large number of particles.
On the other hand, the QFI for entangled states becomes
\begin{equation}
\max_{\state\in\D(\H_N)}\QFI(\state)=\frac{(\gamma t L)^{2}}{4} N^2\, ,\label{eq:linspec general}
\end{equation}
and scales with $N^2$  (for fixed $L$). 
\newline
\textbf{Optimal positioning and the ultimate bounds}:
We can optimize the QFI in Eq.~\eqref{eq:supremum separable} and Eq.~\eqref{eq:max FisherG} over the positioning $x_0$.
Again, we assume that the particles are located in the interval $\left[x_0,L+x_0\right]$ and we fix both $x_0$ and $L$. In order to maximize the right-hand sides of both Eq.~\eqref{eq:supremum separable} and Eq.~\eqref{eq:max FisherG},  an experimenter should put all qubits at the position $x_i=x_0+L$. This means that the particles are as far away as possible from the point $x_0$. If this is the case we get for separable states
\begin{equation}\label{eq:ultimate_max FisherG_sep}
\max_{\state\in\mathrm{SEP}_N}\QFI(\state)= (\gamma t)^2 N L^2.
\end{equation}
Similarly, the maximal QFI over all states becomes
\begin{equation}\label{eq:ultimate_max FisherG}
\max_{\state\in\D(\H_N)}\QFI(\state)=(\gamma t)^2 N^2 L^2. 
\end{equation}
An experimental realization of this positioning includes another dimension of the system. Atoms in an optical microtrap can be arranged in a $2$ dimensional lattice. Then, one dimension can be defined as the $x$ dimension and all qubits can be placed at one position $x=x_0+L$ by using the second dimension. In state of the art ion traps this arrangement is hard to realize. However, in future on-chip ion traps as proposed, e.g., in Refs.~\cite{Kielpinski2002,Lekitsch2017} this arrangement is possible.  In both types of experiments the extension of the qubit chain at $x_0+L$ must be much smaller as $L$ in order to exclude effects due to field gradients in the second dimension.

Using Eq.~\eqref{eq:LimitGSep} and Eq.~\eqref{eq:limitALL}  we can give now the ultimate bounds for the precision of estimating $G$, with the usage of $N$ qubits placed in the fixed interval $\left[x_0,L+x_0\right]$, and when we perfectly know the value of the field at $x_0$.
For separable probe states, the best achievable precision for the determination of $G$ is given by
\begin{equation}\label{eq:SQL_full}
\Delta^2 \tilde{G} \ge \frac{1}{(\gamma t)^2 N L^2},
\end{equation}
similar to the SQL.
Likewise, for entangled probe states, we get a Heisenberg-like scaling given by
\begin{equation}\label{eq:HL_full}
\Delta^2 \tilde{G} \ge \frac{1}{(\gamma t)^2 N^2 L^2}. 
\end{equation} 
For both, the scaling behavior in $N$ (for fixed {length $L$}) is identical to the case of the estimation of global parameters.  This is not surprising, since we assumed that we perfectly know the value of the offset field $B_0$ at  the position $x_0$ and the optimal strategy is to use all particles for the estimation of the field at the position $x_0+L$.

\subsection{Optimal measurements for experimental realizations}\label{sec:measurements_noiseless_case}

The optimal states derived by us can be prepared in  experimental settings such as trapped ions and neutral atoms in optical microtraps. In experiments with trapped ions the preparation of GHZ states \cite{Greenberger1989} up to $N=14$ qubits with high fidelity is possible \cite{Monz2011}. In experiments with neutral atoms in optical microtraps the preparation of a Bell or GHZ state as defined in Eq.~\eqref{eq:def:GHZ} with $N=2$ qubits has been achieved \cite{Wilk2010}.

However, as explained in Section~\ref{sec:gradient_estimation_theory} the bounds involving the QFI assume implicitly the application of the optimal measurement. In general, the optimal measurement saturating the quantum Cram\'{e}r-Rao bound is the projective measurement in the eigenbasis of the so-called symmetric logarithmic derivative \cite{Giovannetti2006}. This measurement can be difficult to perform in practice. Fortunately, the optimal measurement is not necessarily unique. In what follows we show that parity measurements in the $x$ basis are sufficient in order to reach the maximal possible precision in gradient estimation with GHZ states. Parity measurements can be easily performed in experiments with trapped ions \cite{Leibfried2004,Monz2011} as first proposed by Bollinger et al. in 1996 \cite{Bollinger1996} and neutral atoms in optical microtraps \cite{Isenhower2010,Wilk2010}.
A parity measurement is basically a detection of the number of qubits in either the spin-up or spin-down state and can be realized with almost $100\%$ efficiency \cite{Bollinger1996}. 
 Interestingly, the parity measurement does not depend on the spacing of particles and is thus the same for any configuration $\lbrace x_i \rbrace$.  
\newline
\textbf{Classical Fisher information:} 
A parity measurement in the $x$ basis is a projective measurement $ \PPP \coloneqq \{P_+,P_-\}$ with the projective operators
\begin{equation}\label{eq:POVM}
P_+ = \frac{1}{2}\left(\I +\sigma_{x}^{\ot N}\right)\ ,\  P_- =\frac{1}{2}\left(\I- \sigma_{x}^{\ot N}\right)\ .
\end{equation}
After the time $t$ the initial $N$-qubit state $\state$ evolves due to $U_G \rho U_{G}^\dagger$. Upon measuring $\PPP$ on $\rho_G$, the output probabilities are given by
\begin{equation}
p_+(G)= \tr\left(U_G \rho U_{G}^\dagger P_+\right)\ ,\  p_-(G)= \tr\left(U_G \rho U_{G}^\dagger P_-\right)\ .
\end{equation}
In Appendix~\ref{app:FI} we show that 
\begin{equation}\label{eq:EXPvalue}
\tr\left(U_G \psi_{\GHZ} U^\dagger_G  \sigma_{x}^{\ot N} \right)=\cos\left[ N \gamma B_0 t+ \gamma G t \sum_{i=1}^N (x_i-x_0) \right]\ ,
\end{equation}
where $\psi_{\GHZ}\coloneqq\kb{\GHZ}{\GHZ}$. Using this expression, together with Eq.  \eqref{eq:POVM} and the definition of the FI in Eq.~\eqref{eq:FI} we find that the classical Fisher information associated with the statistics of parity measurements with GHZ states is given by
\begin{equation}\label{eq:cfiGHZ}
\FI(U_G \psi_{\GHZ} U_{G}^\dagger,  \PPP) = (\gamma t)^2 \left[\sum_{i=1}^{N}(x_i-x_0)\right]^2 \ ,
\end{equation}
and equals the QFI for estimating $G$ with GHZ states [see Eq.~\eqref{eq:limitALL}]. Therefore, parity measurements in the $x$ basis are optimal for gradient estimation with GHZ states. 
The choice of an optimal measurement is not unique, also measurements of the collective spin operator in the $x$-direction $J_x$ are optimal as shown in Appendix~\ref{app:JXmes}. 
\newline
\textbf{Error propagation formula:}
It turns out that measurements of the expectation value of the parity $\hat{M}=P_++P_-=\sigma^{\ot N}_{x}$ with GHZ states also saturate the ultimate limitations given in Eq.~\eqref{eq:limitALL} for the accuracy of the measurement of  $G$. In usual experiments this expectation value $\braket{\hat{M}}$ is measured for different probing times $t$ and if the initial state is $\psi_{\GHZ}$ the theoretical time dependence is given in Eq.~\eqref{eq:EXPvalue}. In this measurement scheme the gradient $G$ is deduced from the value of the  frequency, which can be estimated by a fit on the data. This procedure however requires the known value of the offset field $B_0$. If one has no \emph{a priori} knowledge about $B_0$ one has to average \cite{Bartlett2007} over all possible values of $B_0$ and one cannot infer the value of $G$. It is possible to avoid this problem by measuring  this expectation value for different positioning $\{x_i\}$ at a \emph{fixed} probing time $t$. This strategy has been realized with a single ion moving through a gradient field in Ref.~\cite{Walther2011}. However, this scheme is definitely a less practical solution as one has to make sure that the initial quantum states are the same, despite the change in the configuration of the chain. The case of no \emph{a priori} knowledge about the offset field will be considered systematically in Section~\ref{sec:no_a_priori_knowledge}.

 With the error propagation formula in Eq.~\eqref{eq:error_propagation_formula} and using Eq.~\eqref{eq:EXPvalue} together with the fact that $\braket{\hat{M}^2}=1$ we can show that both measurement strategies (varying the probing time at a fixed positioning and varying the positioning at a fixed probing time) saturate the Cram{\'e}r-Rao bound and therefore also the analogue HL for gradient estimation, assuming that the positioning $\{x_i\}$ and the measurement time $t$ can be determined with high precision, so we have
\begin{equation}
\Delta^2 \tilde{G}_{\hat{M}}=\frac{1}{(\gamma t)^2\left[\sum_{i=1}^{N}\left(x_i -x_0\right)\right]^{2}} \ .
\end{equation}

\subsection{Gradient estimation in presence of collective phase noise}\label{sec:estimatin_with_noise}
\begin{figure}
\begin{center}
\includegraphics[width=0.45\textwidth]{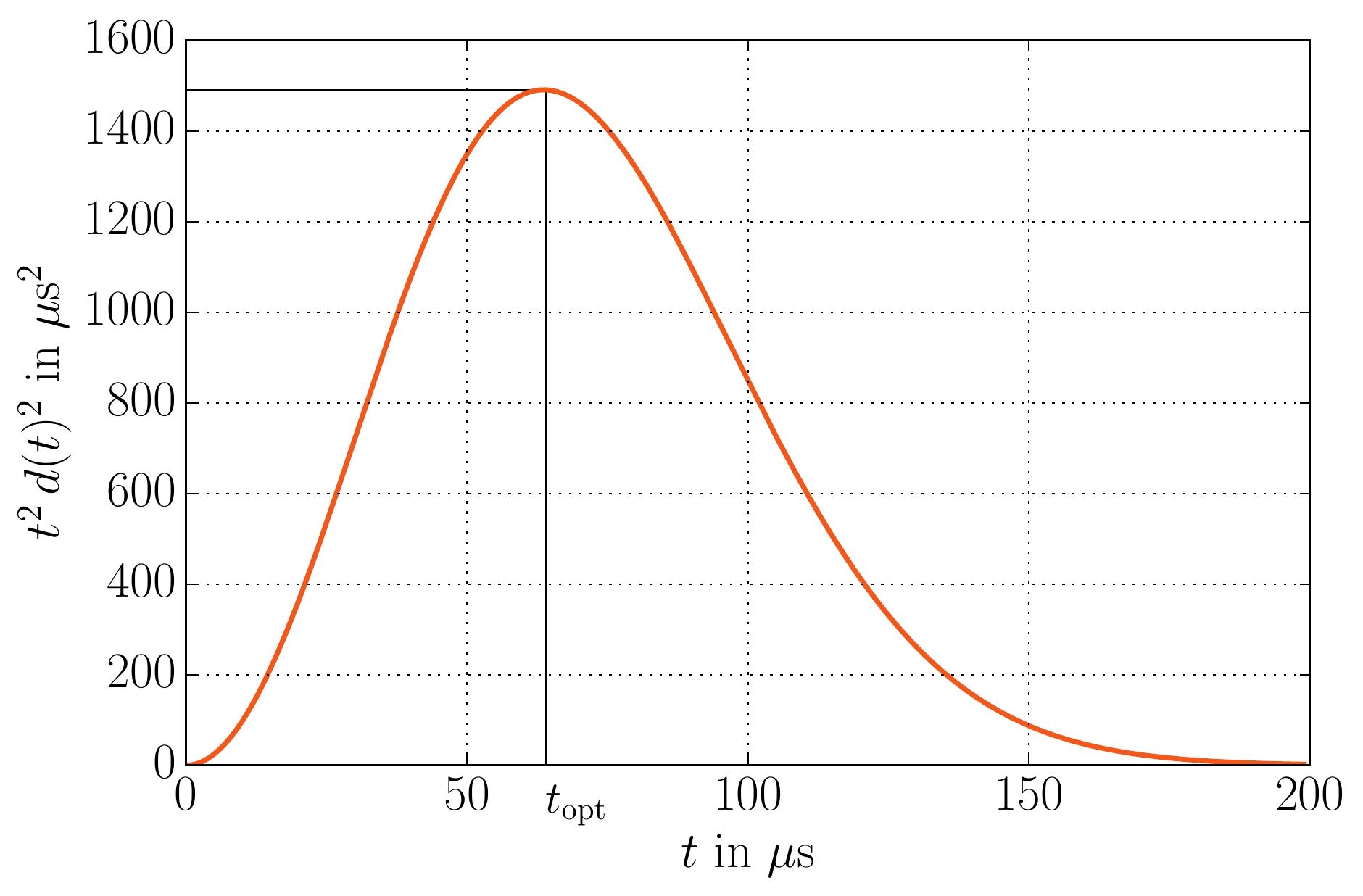}
\end{center}
\caption{The time dependent part of the QFI for a GHZ state with $N=50$ qubits in the presence of collective phase noise [see Eq.~\eqref{eq:QFI_noisy_GHZ}]. Here, we assume typical field fluctuations in the order of $\gamma' \Delta E=2 \pi\cdot 50\,$Hz and correlation time $\tau_c=1\,$s (see, e.g., Ref.~\cite{Baumgart2016}).
}\label{fig:noisyGHZ}
\end{figure}
In realistic experiments noise affects the time evolution of a quantum system,  reducing the entanglement of the probe states. This can diminish the enhancement in precision for  gradient estimation obtained with entangled states. In both types of experiments considered in this work collective phase noise is the main source of decoherence. In experiments with trapped ions, this noise is caused by temporal magnetic field fluctuations \cite{Monz2011}, whereas in experiments with neutral atoms in optical microtraps, collective phase noise is caused by temporal fluctuations of the trapping potential \cite{Kuhr2005}. In what follows we describe the influence of collective phase noise on the proposed scheme for the estimation of the gradient of the magnetic field.
\newline
\textbf{Collective phase noise:}
We focus our attention on trapped ions. We  follow the steps and assumptions for describing the noise source in this system given in Ref.~\cite{Monz2011}.
The total Hamiltonian of the system including the noise is given by
\begin{equation}
H'=\hbar \gamma H_0 + \hbar \gamma H_G +\hbar \gamma' \Delta E(t) J_z \ ,
\end{equation} 
where operators $H_0$ and $H_G$ are defined in Eq.~\eqref{eq:hamiltonian}, $\gamma'$ is the coupling constant, and $\Delta E(t)$ is the temporally fluctuating random field. We will use $\left\langle \cdot\right\rangle$ to denote the average over the stochastic fluctuations of this field. Following Ref.~\cite{Monz2011} we assume (i) no systematic time-dependent bias due to phase fluctuations  $\braket{\delta\varphi}=0$, where $\delta\varphi\coloneqq\int_0^t\mathrm{d}\tau \Delta E(\tau)$, (ii) Gaussian character of the fluctuations $\delta\varphi$,   (iii) stationarity of the noise process, $\braket{E(t+\tau)E(t)}=\braket{E(\tau)E(0)}$, and finally (iv) that the time correlation $\braket{\Delta E(t) \Delta E(0)}=(\Delta E)^2 \exp\left[-t/\tau_c \right]$ decays exponentially, with the correlation time $\tau_c$ and the fluctuation strength $\Delta E$.

Now, for a fixed realization of the stochastic process the output state at a given time $t$ is given by 
\begin{equation}
\state(t)=U'_{t} \state (U'_{t})^\dagger\ ,
\end{equation} 
with $U'_{t}=U_G U_\mathrm{noise}$, where $U_G$ is given in Eq.~\eqref{eq:timeevolution} and $U_\mathrm{noise}=\mathrm{exp}\left[-\ii\gamma' \int_0^t\Delta E(\tau) \mathrm{ d}\tau J_z\right]$ describes the noise acting on the system.
By averaging over the realization of the stochastic process $\Delta E(t)$ (for the fixed time $t$) we get that the initial state $\state$ is mapped into $\Lambda'_G(\state)$, where 
\begin{equation}\label{eq:NOISoutput}
\Lambda'_G(\state)= U_G \bar{\state}(t) U_G^\dagger\ \ \text{with}\ \ \bar{\state}(t)\coloneqq \langle U_{\mathrm{noise}} \state U^\dagger_{\mathrm{noise}}\rangle\ .
\end{equation}
 Since the encoding of the value of the gradient commutes with the map describing the noise we have
\begin{equation}\label{eq:NOISYqfi}
\QFI\left(\state,\Lambda'_G\right)=(\gamma t)^2\QFI\left[ \bar{\state}(t),H_G \right] \ .
\end{equation}
That is, in order to compute the QFI in the presence of collective phase noise it suffices to calculate the "standard" QFI on the noisy initial state.  In ion trap experiments, for which this paper is relevant, the repetition rate (i.e. the rate with which a single experiment can be repeated) typically is fixed  for noise cancellation of another noise source and $t\approx \mu$s-ms \cite{Monz2011}.  Moreover, the correlation time for the field fluctuations $\Delta E(t)$ is of order $\tau_c \approx$s . Therefore, we can assume $\tau_c \gg t$ \cite{Monz2011}.
\newline
\textbf{Noisy gradient estimation with GHZ states:}
For a GHZ state in presence of collective phase noise, the QFI can be calculated analytically (see Appendix~\ref{app:GHx_under_noise} for details) and takes a closed form
 \begin{equation}\label{eq:QFI_noisy_GHZ}
\QFI\left(\psi_{\GHZ},\Lambda'_G\right)=  d(t)^2 \gamma^2 t^2 \left[\sum_{i=1}^N (x_i-x_0)\right]^2\ ,
\end{equation}
with $d(t)=\mathrm{exp}\left\{-\left(N \gamma' \Delta E \tau_c\right)^2 \left[\exp(- \frac{t}{\tau_c})+\frac{t}{\tau_c} -1\right]\right\}$. 
We see that the QFI first increases with $t^2$ and then, in the limit of large times, decreases double exponentially to zero. Therefore, there exists a global maximum\footnote{Because the repetition rate is fixed, the relevant figure of merit to optimize is $\QFI(t)$ rather than $\QFI(t)/t$, which appears naturally when a variation of the repetition rate is possible and the total time of the experimental procedure is \emph{fixed}  \cite{Demkowicz-Dobrzanski2012,Escher2011}. } and an optimal measurement time as shown in Fig.~\ref{fig:noisyGHZ}.
 Under the condition $\tau_c \gg t$ [see the discussion below Eq.~\eqref{eq:NOISYqfi}] we get $\mathrm{exp}[(-t/\tau_c)+t/\tau_c -1]\approx 1/2(t/ \tau_c)^2$, which gives the optimal measurement time $t_\mathrm{opt}=\sqrt{2}/(N \gamma' \Delta E )$ and the maximal QFI 
\begin{equation}
\QFI=2 \gamma^2 \left[\sum_{i=1}^N (x_i-x_0)\right]^2 / e (N \gamma' \Delta E )^2 ,\label{eq:noise_opt_t}
\end{equation}
which can be maximized over the positioning $x_i$. The positioning maximizing the QFI in Eq.~\eqref{eq:noise_opt_t} leads to placing all qubits as far away as possible from $x_0$ that is $x_0+L$. Then, the maximal QFI is given by
\begin{equation}
\QFI=2 \gamma^2 L^2 / e ( \gamma' \Delta E )^2 ,
\end{equation}
which does neither scale with $N$ nor with $N^2$.

Now we will discuss the saturation of  the quantum  Cram\'{e}r-Rao bound for the estimation of the gradient $G$, with the QFI as in Eq.~\eqref{eq:QFI_noisy_GHZ} with parity measurements.
Generically we can achieve such a saturation by a suitable choice of the global phase $\theta$ for the probe state
\begin{equation}
\ket{\GHZ_\theta}\coloneqq \ \frac{1}{\sqrt{2}} \left(\ket{0}^{\otimes N}+\exp(\ii \theta)\ket{1}^{\otimes N}\right)\ ,\label{eq:GHZ_theta}
\end{equation}
 and performing a parity measurement
$\hat{M}=\sigma_x^{\ot N}$ as shown in Appendix~\ref{app:Parity_noisyGHZ} and \ref{app:EPF_noisyGHZ}.
Then, the Cram\'{e}r-Rao bound can be saturated if the condition
\begin{equation}
\mathrm{cot}\left[N\gamma B_0 t+\gamma G t \sum_{i=1}^N (x_i-x_0)+\theta\right]=0,
\end{equation}
holds. However, due to this condition an experimenter must have full knowledge about $B_0$ and $G$ at all measurement times $t$ in order to prepare the state in Eq.~\eqref{eq:GHZ_theta} that is not feasible.
\newline
\textbf{Noisy gradient estimation with the state $\ket{\mathrm{P}}$:}
In general, it is difficult to evaluate the QFI in Eq.~\eqref{eq:NOISYqfi} analytically  for arbitrary probe states $\state$. The initial pure product state $\psi_{\mathrm{P}}\coloneqq \kb{\mathrm{P}}{\mathrm{P}}$, given in Eq.~\eqref{eq:def:product}, evolves into a mixed state $\bar{\psi}_{\mathrm{P}}(t)$ due to noise. We will focus on the regime of large probing times $t\xrightarrow{} \infty$. In this limit, the state does not change any more due to collective phase noise and $[\bar{\psi}_{\mathrm{P}}(\infty),J_z]=0$ \cite{Lidar1998} and therefore we call this regime the \emph{steady state regime}. 
We get that for this state the QFI does not vanish in the limit of large times (see Appendix~\ref{app:product_state_steady} for details),
\begin{equation}
\QFI\left(\psi_{\mathrm{P}},\Lambda'_G\right)\xrightarrow{t\rightarrow\infty}  (\gamma t)^2\left[\sum_{i=1}^N x_i^2-\frac{1}{N} \left( \sum_{i=1}^N x_i\right)^2\right]\ .\label{eq:qfi_p_steady}
\end{equation}
Thus, in the steady state regime, the product state $\ket{\mathrm{P}}$ performs better then the GHZ state in the presence of noise. Interestingly the QFI in Eq.~\eqref{eq:qfi_p_steady} is independent of $x_0$.
This is due to the fact that in the steady state regime the probe state is not only invariant under collective phase noise but also under the offset field $B_0$ since $[\bar{\psi}_{\mathrm{P}}(\infty),J_z]=0$.
\newline
\textbf{Optimal positioning:}
For the GHZ state in the steady state regime the QFI vanishes independent of the positioning of the qubits.
However, for the separable state $\ket{\mathrm{P}}$ in the steady state regime the optimal positioning (see  Lemma 1 in Appendix~\ref{app:product_state_steady} for the proof) at the interval $x_i\in [\tilde{x}_0,\tilde{x}_0+L]$ is to place one half of the qubits at position $x_i=\tilde{x}_0$ and the other half as far away as possible $x_i=L+\tilde{x}_0$ (note that the position $\tilde{x}_0$ is some fixed reference coordinate and can have any value including $x_0$). For this case the QFI is given by
\begin{equation}\label{eq:qfi_p_steady_opt}
\QFI\left(\psi_{\mathrm{P}},\Lambda'_G\right)\xrightarrow{t\rightarrow\infty}  (\gamma t)^2 L^2 \frac{N}{4}\,,
\end{equation}
that is linear in $N$ and by constant factor of $1/4$ smaller than in the case of having no noise and placing all qubits at $x_i=x_0 + L$ [see Eq.~\eqref{eq:ultimate_max FisherG_sep}].  This can be realized by a similar arrangement as described in Section~\ref{sec:full_a_priori_knowledge}~A.

The optimal spacing of the particles leads to the situation in which the particles are located at two different positions. As a result, two different  unitaries $\UU_{G}^I$ and $\UU_{G}^{II}$ act on one half of the particles each. This is a local estimation strategy similar to differential interferometry \cite{Landini2014, Altenburg2016}. 
In these works phase and frequency estimation in the presence of correlated phase noise were investigated.  It was shown that the quadratic scaling in $N$  can be preserved by the usage of differential interferometry  in  presence of correlated noise. Furthermore, it was shown that for the product state $\ket{\mathrm{P}}$ a linear scaling in $N$ up to a constant factor can be preserved. In Eq.~\eqref{eq:qfi_p_steady_opt} we find a similar result. Furthermore, in Ref.~\cite{Dorner2013} also similar results where found. Here, the two unitaries $\UU_{G}^I$ and $\UU_{G}^{II}=\UU_{-G}^I$ act on half of the particles each. For this estimation scenario it was shown that the HL can be preserved in presence of correlated dephasing.

%%%%%%%%%%%%%%%%%%%%%%%%%%%%%%%%%%%%%%%%%%%%%%%%%%%%%%%%%%%%%%%%%%%%%%%%%%%%%%%%%%%%%%%%%%%%%%%%%%%%%%%%%%%%%%%%

\section{Gradient estimation without \textit{a priori} knowledge about $B_0$} \label{sec:no_a_priori_knowledge}
\begin{figure*}
\begin{center}
\subfigure[ ]{\includegraphics[width=0.4 \textwidth]{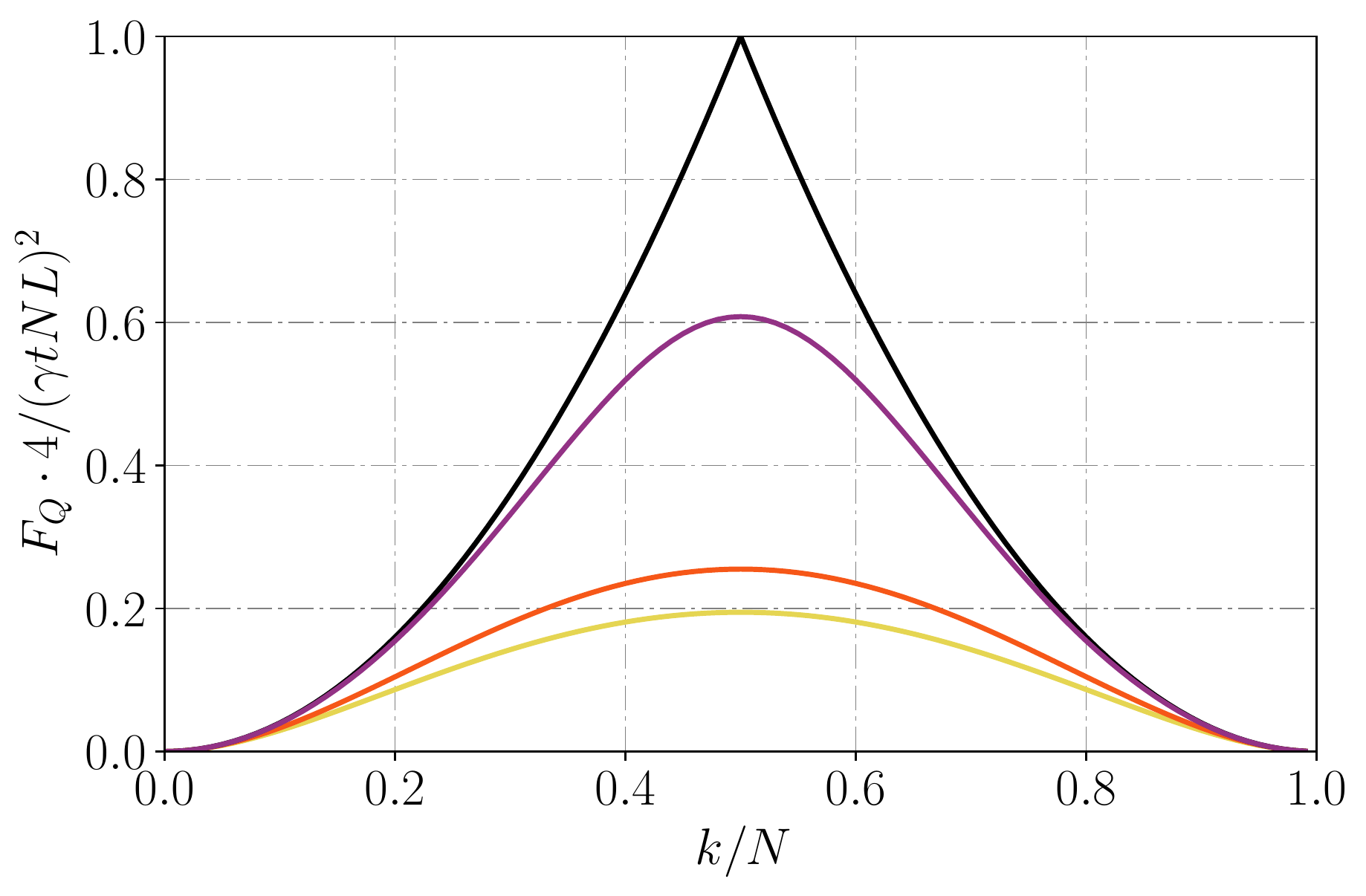}}%\hfill
\subfigure[ ]{\includegraphics[width=0.5\textwidth]{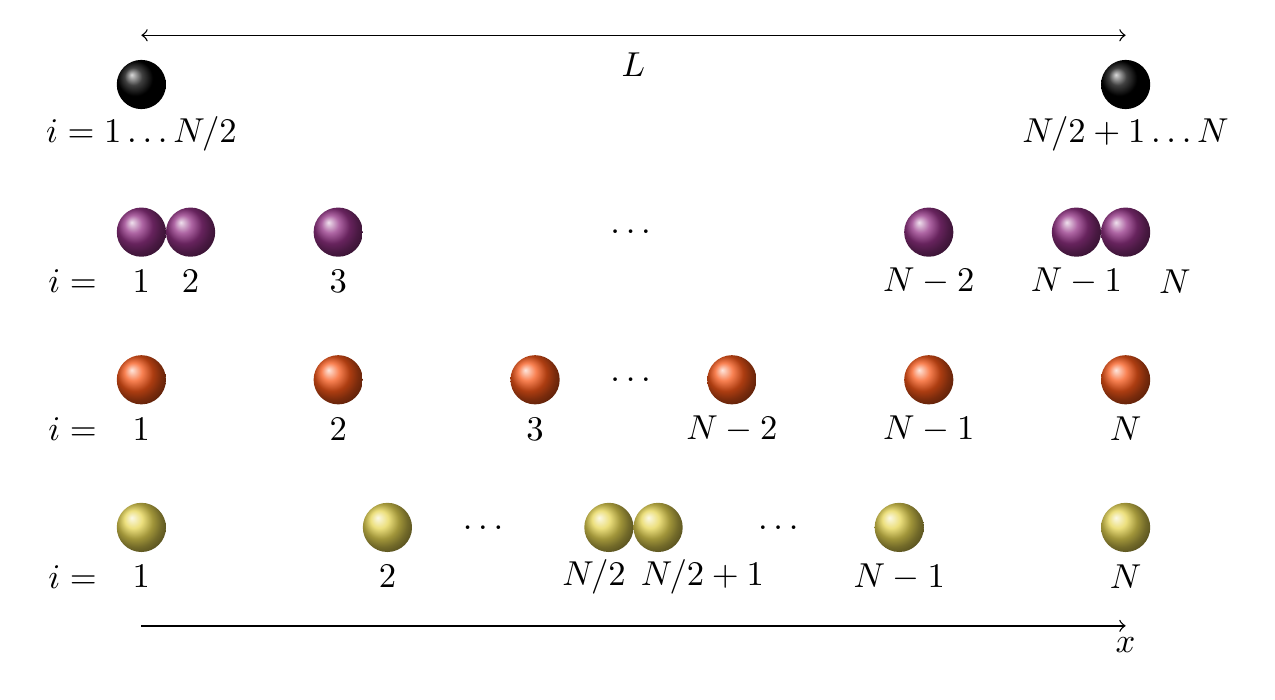}}%\hfill
\end{center}
\caption{(Color online)
(a) Quantum Fisher information for optimal decoherence-free states $\ket{\ODF_k}$ for different  $k$, for $N=100$ qubits. Different grey shades (colors) denote different spatial distributions of particles.
(b) Different spatial distributions with a chain of $N$ qubits spread over the interval of length $L$.  Black denotes the optimal spatial distribution; $N/2$ qubits at one end of the chain and the other $N/2$ qubits at the other end of the chain. 
Darker grey (purple) denotes a distribution, where the qubits are more dense at the ends of the chain.
Grey (red) denotes an equidistant spatial distribution. 
Lighter grey (yellow) denotes a distribution, where the qubits are more dense in the middle of the chain. The spatial distributions used here are given in Appendix~\ref{app:spatial_distributions}. }\label{fig:max_qfi_vs_k}
\end{figure*}

In Section~\ref{sec:full_a_priori_knowledge}, we derived bounds in precision for the estimation of the gradient $G$, assuming complete knowledge of the offset field $B_0$. The question arises whether it is possible to measure $G$ without knowing anything about $B_0$. We can already answer this question:  collective phase noise can be interpreted as an erasure of information about $B_0$ in time. Therefore, waiting long enough leads to the case of having no knowledge about $B_0$. 
In Eq.~\eqref{eq:QFI_noisy_GHZ}  we saw that in the  steady state regime the QFI for the GHZ state $\ket{\GHZ}$  vanishes such that the GHZ state is useless in order to estimate $G$ in the case of having no knowledge about $B_0$.
However, in Eq.~\eqref{eq:qfi_p_steady_opt} we saw that in the  steady state regime the QFI for the product state $\ket{\mathrm{P}}$ didn't vanish. Therefore, it is possible to measure $G$ without knowing $B_0$. In this section we systematically study limits on the accuracy for estimating the gradient $G$, when no knowledge about $B_0$ is available.

In this section we first (A) identify optimal probe states and the corresponding bounds in precision for estimating $G$ when no \emph{a priori} knowledge of $B_0$ is available. Interestingly, we find that these bounds asymptotically behave in the similar way, as in the noiseless case. 
Then, (B) we prove  that similarly to the noiseless case parity measurements in the $x$ basis saturate the derived bounds in precision.  Finally, (C) we compare these results with the other measurement strategies considered in this paper and earlier works on the subject.

\subsection{Bounds in precision for gradient estimation}

In what follows we derive precision bounds for estimating a gradient with a fixed positioning $\{x_i\}$. Then, we discuss the case of equidistant spacing. Finally, we derive the optimal positioning for the particles located in the fixed interval of length $L$.

When assuming no \textit{a priori} knowledge about the offset field $B_0$ 
the Hamiltonian of the system does not change compared to Eq.~\eqref{eq:hamiltonian}. However, now the offset field $B_0$ is unknown and therefore must be treated as a random variable and all states, operations, and measurements performed on the system have to be averaged over all realizations of this random variable \footnote{Because of the specific form of the measurements and evolutions used in our analysis, we can limit ourselves only to averaging the initial  quantum states.}.  Phrasing this in a different language it can be said that we erase the reference frame \cite{Bartlett2007} associated to the  knowledge of the offset field [or, formally speaking, the  one-parameter group of transformations formed by operators $\exp\left(-\ii \theta J_z \right)$, where $\theta\in[0,2\pi)$]. Complete erasure of the knowledge about $B_0$ is modeled  by averaging the initial state $\state$ over all possible rotations around the $z$-axis
\begin{equation}\label{eq:averagedSTATE}
\bar{\state}\coloneqq \int_0^{2\pi} \frac{\mathrm{d}\theta}{2 \pi}\, \mathrm{e}^{-\ii 2 \theta  J_z} \state \mathrm{e}^{\ii 2\theta J_z}\ .
\end{equation}
States of the above form are called decoherence-free states since they are stationary states with respect to collective phase noise: $\left[\bar{\state},J_z \right]=0$. Conversely, every state $\tau$ satisfying   $\left[\tau,J_z \right]=0$ can be written as $\tau=\bar{\state}$, for a suitable $\state$ \cite{Lidar1998}. Decoherence-free states are insensitive to the offset field $B_0$ but in general can be affected by gradients i.e. $[\bar{\varrho}, H_G]\neq 0$. This suggests that they can be used for gradient estimation.
In what follows we will use  the decoherence-free subspace for $N$ qubits $\mathrm{DFS}_N$ to denote the set of decoherence-free states in the considered scenario. It is easy to see from the definition that every $\state\in\DFS_N$ can be written as a convex combination
\begin{equation}\label{eq:convDFS}
\state=\sum_{k=0}^N p_k \state_k
\end{equation}
of decoherence-free states $\state_k \in\D(\V_k)$, each  supported on the subspaces $\V_k$  spanned by computational basis vectors $\ket{i_1}\ket{i_2}\ldots \ket{i_n}$ containing exactly $k$ excitations ("1") and $N-k$ qubits in the ground state "0" \footnote{Alternatively, $\V_k$ can be characterized as the eigenspace of $J_z$ corresponding to the eigenvalue $\lambda_k =\frac{1}{2}(N- 2k)$.}. 
\newline
\textbf{Optimal decoherence-free states:} 
In order to compute how useful decoherence-free states are for the estimation of the gradient $G$ we use directly Eq.~\eqref{eq:simpl QFI}, which reduces the problem of computing the QFI for the proposed metrological scheme to the computation of the $\QFI[\state,H_G]$, where $H_G = \sum_{i=1}^N (x_i-x_0) \frac{\sigma_z^{(i)}}{2}$. Using the fact that $H_G$ preserves subspaces $\V_k$ and the properties of $\QFI[\state,H_G]$  (see Appendix~\ref{app:presicion_bounds_not_knowing_b} for details), we prove that in order to find optimal decoherence-free states it suffices to look only at optimal (and thus necessary pure) states in each subspace $\V_k$ separately, 
\begin{equation}
\max_{\state \in \DFS_N} \QFI(\state) = (\gamma t)^2 \max_{k=0,\ldots,N}\, \max_{\state\in\D(\mathcal{V}_k)} \QFI[\state,H_G] \,.\label{eq:sequentialreduction}
\end{equation}
The maximal attainable QFI for states in the subspaces $\V_k$ with $k$ excitations is given by (see Appendix~\ref{app:presicion_bounds_not_knowing_b})
\begin{equation}\label{eq:maxfischsubspace}
\max_{\state\in\D(\mathcal{V}_k)} \QFI(\state)=(\gamma t)^2\left[\sum_{i=1}^{l}\left(x_{i}-x_{N-i+1}\right)\right]^{2}\ ,
\end{equation}
where $l=\min\{k,N-k\}$. We observe that the above result is independent of $x_0$ and that it does not change under the simultaneous translation of each $x_i\xrightarrow{} x_i+\delta$ by the same distance $\delta$. This is a consequence of the relation  $\left[{\varrho},J_z\right]=0$, valid for $\state\in\DFS_N$. The optimal decoherence-free (ODF) state for a given number of excitations $k$, yielding Eq.~\eqref{eq:maxfischsubspace} is given by
\begin{equation}\label{eq:optstatesubspace}
\ket{\ODF_k}=\frac{1}{\sqrt{2}}\left(\ket{1}^{\otimes k}\otimes \ket{0}^{\otimes N-k}+\ket{0}^{\otimes N-k}\otimes \ket{1}^{\otimes k}\right)\ .
\end{equation}
The detailed derivation of  Eq.~\eqref{eq:maxfischsubspace}  and Eq.~\eqref{eq:optstatesubspace} is given in Appendix~\ref{app:presicion_bounds_not_knowing_b}.
A remarkable fact is that for decoherence-free states the QFI for the estimation of  the gradient $G$ does not decrease in time due to the collective phase noise. In Fig.~\ref{fig:max_qfi_vs_k} we show the QFI from Eq.~\eqref{eq:maxfischsubspace} for different number of excitations $k$ and for different positioning $\{x_i\}$. We observe that the maximal QFI is attained exactly for $k=N/2$, for an arbitrary positioning of the qubits\footnote{For simplicity we assumed that $N$ is even. In general the maximal QFI is attained for  $k=\lfloor N/2 \rfloor$ (see Appendix \ref{app:presicion_bounds_not_knowing_b} for details.)}. This observation can be proven analytically for any positioning of the particles (see  Appendix~\ref{app:presicion_bounds_not_knowing_b} for details) and we find
\begin{equation}\label{eq:optimalDFS}
\max_{\state \in \DFS_N} \QFI(\state)=(\gamma t)^2\left[\sum_{i=1}^{N/2 }\left(x_i-x_{N-i+1}\right)\right]^{2}\  
\end{equation}
with $\ket{\ODF_{N/2}}$ being the optimal state. It is important to note that just like in the noiseless case [see Eq.~\eqref{eq:def:GHZ}], the optimal state does not depend on the spacing of particles. From the quantum Cram\'{e}r-Rao bound in Eq.~\eqref{eq:Cramér_rao2} we get the ultimate bound on the precision of the estimation of the gradient $G$ with decoherence-free states
\begin{equation}\label{eq:precODF}
 \Delta^2 \tilde{G}\geq \frac{1}{(\gamma t)^2 \left[\sum_{i=1}^{N/2}\left(x_i-x_{N-i+1}\right)\right]^{2}} \ .
 \end{equation}

\noindent
\textbf{Separable states:} 
In the case of no \textit{a priori} knowledge about the offset field it is hard to derive precision bounds for separable states. This follows from the difficulty to characterize the convex set $\DFS_N\cap\mathrm{SEP}_N$, consisting of states that are both decoherence-free and separable. In particular, extremal points of $\DFS_N\cap\mathrm{SEP}_N$  generally do not have the form of pure state. We leave the problem of finding the optimal decoherence-free separable state open. However, let us remark that the decoherence-free separable state\footnote{Recall that the averaging operation $\bar{\state}(t)$ preserves separability of quantum states.} $ \bar{\psi}_P(\infty)$ exhibits asymptotically  the same (linear in $N$) scaling   of the QFI   as the optimal product state $\psi_{\mathrm{P}}= \kb{\mathrm{P}}{\mathrm{P}}$ at least for the case of equal and optimal spacing (that is placing half of the qubits at each position $\tilde{x}_0$ and $\tilde{x}_0+L$ for $ \bar{\psi}_P(\infty)$ and placing all qubits at position $x_0+L$ for $\psi_P$) - see Eq.~\eqref{eq:qfi_p_steady} and Eq.~\eqref{eq:qfi_p_steady_opt}.  
\newline
\textbf{Equidistant spacing:} 
Just like in the case of complete knowledge about $B_0$ (described in Sec.~\ref{sec:full_a_priori_knowledge}) we consider a measurement scheme in which $N$ particles are equally spaced in the interval  $\left[\tilde{x}_0,\tilde{x}_0+L\right]$, i.e.  $x_i=\tilde{x}_0+(i-1)\frac{L}{N-1}$ (recall that the position $\tilde{x}_0$ is some fixed reference coordinate and can have any value including $x_0$).  Then, for the optimal decoherence-free state $\psi^{N/2}_{\ODF}\coloneqq\kb{\ODF_{N/2}}{\ODF_{N/2}}$ we have
\begin{equation}
\QFI(\psi^{N/2}_{\ODF})=
\frac{(\gamma t L)^2}{16}\frac{N^4}{\left(N-1\right)^{2}}\ ,\label{eq:linspace decoh}
\end{equation}
which scales $\propto N^2$ for large numbers of particles.

With the optimal separable state from the noiseless case $\ket{ \mathrm P}$,  
the QFI for equidistant spacing in the steady state regime becomes
\begin{equation}\label{eq:product_equi}
\QFI\left(\bar{\psi}_P(\infty)\right)=\frac{(\gamma t L)^2}{12}\frac{N(N+1)}{N-1} \ , 
\end{equation}
which scales $\propto N$ for large numbers of particles.
\newline
\textbf{Optimal positioning:}
Optimizing the right hand side of Eq.~\eqref{eq:optimalDFS} over the positions $x_i \in [ \tilde{x}_0,\tilde{x}_0+L]$, we find the maximal QFI over all decoherence-free states
\begin{equation}\label{eq:max decoh fiisher location}
\max_{\state \in \DFS_N} \QFI(\state)=\frac{(\gamma t L)^2}{4}N^2\ ,
\end{equation}
which is independent of $\tilde{x}_0$ and scales \ $\propto N^2$. The optimal positioning leading to Eq.~\eqref{eq:max decoh fiisher location} is $x_i= \tilde{x}_0$ for $i\le N/2$ and $x_i=\tilde{x}_0 + L$ for $i> N/2$ (see Appendix~\ref{app:presicion_bounds_not_knowing_b} for the proof). This corresponds to locating the particles at two positions with the maximal possible distance $L$. Recall that the same positioning was found to be optimal for estimating the gradient $G$ with the state $\ket{\mathrm{P}}$ in the steady state regime [$\bar{\psi}_P(\infty)$ as discussed above Eq.~\eqref{eq:qfi_p_steady_opt}].

\subsection{Optimal measurements for the experimental realization}
Optimal decoherence-free states $\ket{\ODF_k}$ are equivalent under local unitaries to GHZ states, that means that ODF states can be transformed into GHZ states and vice versa by local unitaries.
ODF states can be prepared with high fidelity by a global S\o{}rensen-M\o{}lmer gate \cite{Sorensen1999} in experiments with trapped ions. That has been performed for $N=14$ qubits in Ref.~\cite{Monz2011}. In experiments with neutral atoms in a lattice the preparation of the ODF state with $k= N/2$ for $N=2$ qubits has been realized \cite{Isenhower2010,Wilk2010}.
We therefore conclude that optimal probe states for gradient estimation can be realized in experiments considered in this work. As in the case of full knowledge about $B_0$ and the absence of noise, the question remains of which measurement should be performed in order to attain the maximal precision. In what follows we show that  for parity measurements in the $x$ basis (i) the classical Fisher information [see Eq.~\eqref{eq:FI}] and   (ii) the error propagation formula [see Eq.~\eqref{eq:error_propagation_formula}] saturate the quantum Cram\'{e}r-Rao bound in Eq.~\eqref{eq:precODF} for optimal decoherence-free states $\ket{\ODF_k}$.
\newline
\textbf{Classical Fisher information:} 
As described in Sec.~\ref{sec:measurements_noiseless_case},
the projective measurement  $\PPP$ of the parity in the $x$-basis is described by the projectors $P_\pm =\frac{1}{2}\left(\I \pm\sigma_{x}^{\ot N}\right)$ [see Eq.~\eqref{eq:POVM}]. In Appendix~\ref{app:FI} we show that the expectation value of the parity on the state $\psi^k_{\ODF}$ evolves according to
\begin{equation}\label{eq:expODF}
\tr\left(U_G \psi^k_{\ODF} U^\dagger_G  \sigma_{x}^{\ot N} \right)=\cos\left[\gamma t G \sum_{i=1}^{l }\left(x_i-x_{N-i+1}\right) \right] \ ,
\end{equation}
where $l=\min\lbrace k, N-k\rbrace$. Using this result and performing the analogous computations as the ones given in Sec.~\ref{sec:measurements_noiseless_case}  we get
\begin{equation}\label{eq:CFIodf}
\FI(U_G \psi^k_{\ODF} U_{G}^\dagger,  \PPP)=  (\gamma t)^2 \left[\sum_{i=1}^{l}\left(x_i-x_{N-i+1}\right)\right]^{2} \ ,
\end{equation}
with $l=\min\lbrace k, N-k\rbrace$. Comparing Eq.~\eqref{eq:CFIodf} and Eq.~\eqref{eq:optimalDFS} we see that parity measurements in the $x$ basis saturate the quantum Cram\'{e}r-Rao bound for the estimation of $G$ with the optimal decoherence-free state $\ket{\ODF_k}$.  In particular, the quantum Cram\'{e}r-Rao bound is saturated for the optimal state $\ket{\ODF_{N/2}}$ and therefore also the bound in Eq.~\eqref{eq:precODF} is saturated.
\newline
\textbf{Error propagation formula:} Just like in the noiseless case (discussed in Sec.~\ref{sec:measurements_noiseless_case}), one can try to estimate the gradient $G$ from the measurements of the expectation value  of $\hat{M}=\sigma_{x}^{\ot N}$ given in Eq.~\eqref{eq:expODF}. Using the error propagation formula in Eq.~\eqref{eq:error_propagation_formula} and the formula for the expectation value in Eq.~\eqref{eq:expODF} with $\braket{\hat{M}^2}=1$, we obtain that this measurement strategy again leads (for small fluctuations of the gradient $G$) to the maximal achievable precision, provided by the optimal state $\ket{\ODF_{N/2}}$:
\begin{equation}\label{eq:epfDFS}
\Delta^2 \tilde{G}_{\hat{M}}=\frac{1}{(\gamma t)^2 \left[\sum_{i=1}^{N/2}\left(x_i-x_{N-i+1}\right)\right]^{2}} \ .
\end{equation}

From the above discussion we see that parity measurements in the $x$ basis  are optimal  in both extremal scenarios considered in this paper - under the condition of full and no \emph{{a} priori} knowledge about the offset field $B_0$. As discussed in Sec.~\ref{sec:measurements_noiseless_case} these measurements can be routinely realized in experiments with trapped ions and neutral atoms in an optical lattices. It is also important that the ultimate accuracy given in Eq.~\eqref{eq:epfDFS} is saturated independently on the value of the gradient $G$ and does not deteriorate with time due to collective phase noise.

\begin{figure*}[]
\begin{center}
	\subfigure[ ]{\includegraphics[width=0.45\textwidth]{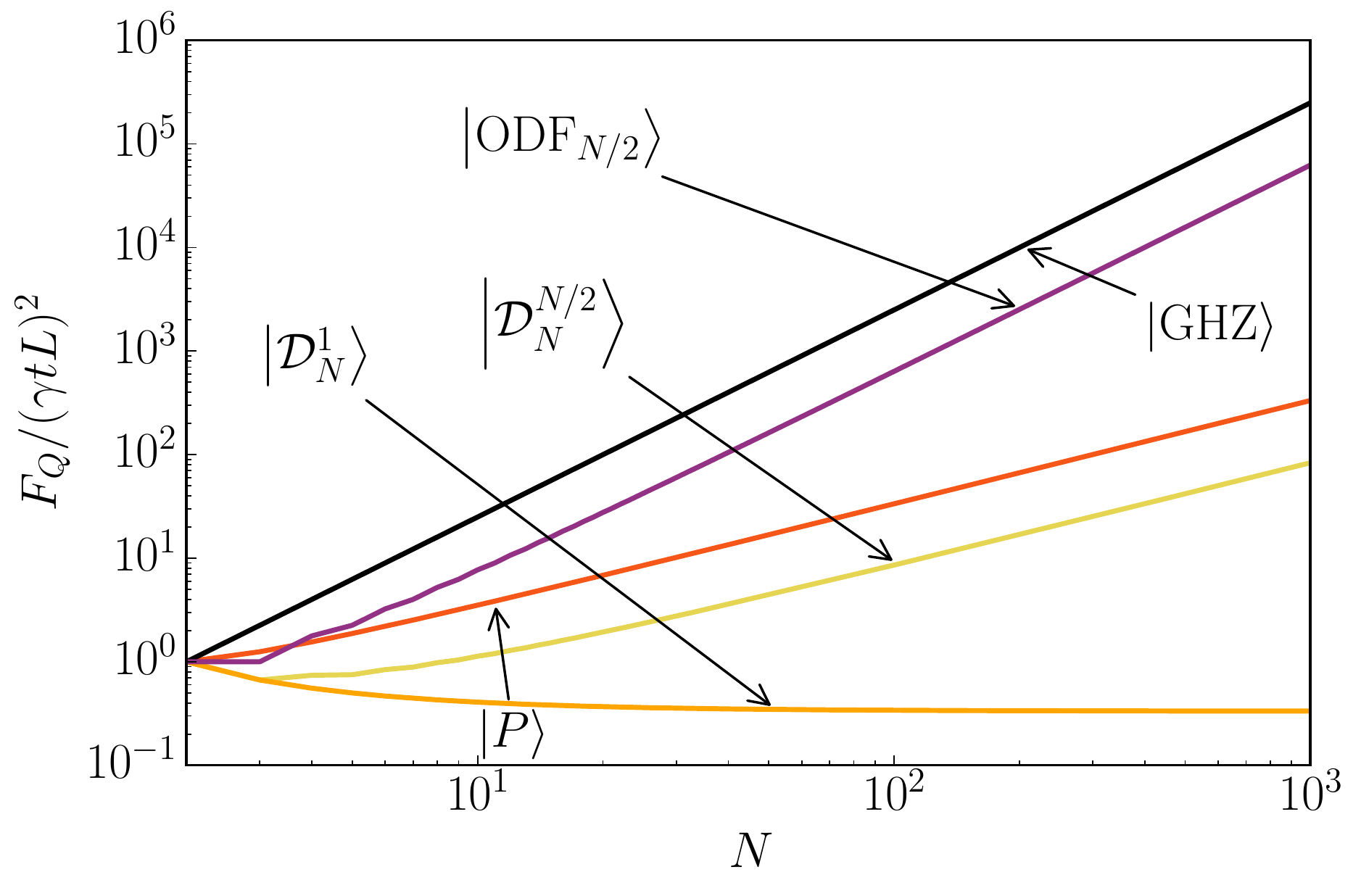}}
	\subfigure[ ]{\includegraphics[width=0.45\textwidth]{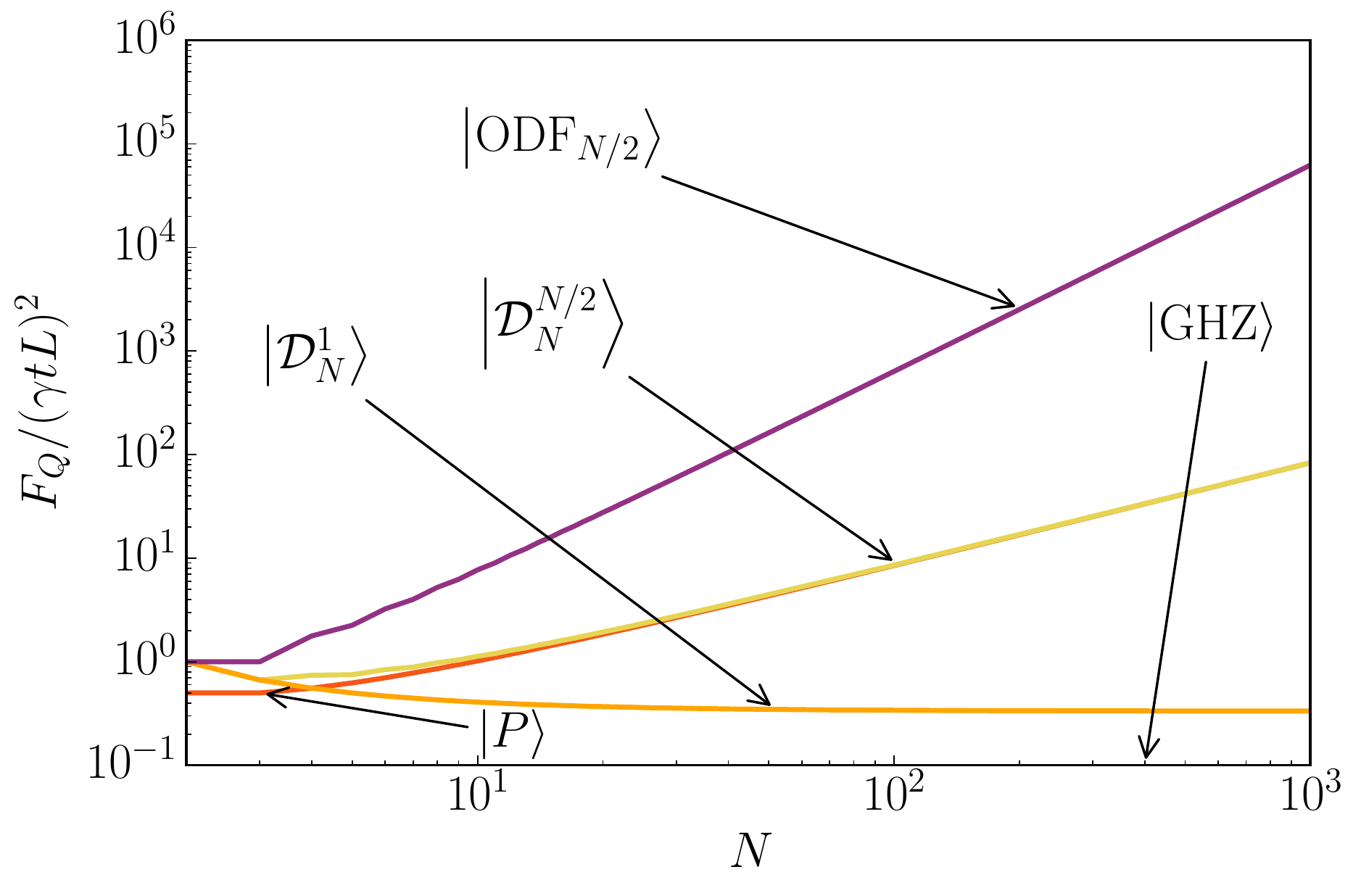}}
\end{center}
	\caption{(Color online) Quantum Fisher information for different families of quantum states for the estimation of the gradient of a magnetic field, under the assumption of equidistant spacing of the particles.
	The case of full \textit{a priori} knowledge of $B_0$ is depicted in part (a), whereas  the case of no \textit{a priori} knowledge of $B_0$ is given in part (b). 
		In (a) the GHZ state performs best and in (b) the QFI for GHZ states vanishes $\QFI=0$. The QFI for decoherence-free states is the same in both scenarios. 	The performance for optimal Dicke states $\ket{\D_N^{N/2}}$ is given by the solid yellow line, for W states $\ket{\D_{N}^1}$ by the solid orange line, and finally, for the optimal decoherence-free states  with $k=N/2$ by the solid purple line. The last one is not smooth for small $N$. This is due to the fact that the exact formula for the QFI differs from Eq.~\eqref{eq:linspace decoh} for odd $N$ as shown in Appendix~\ref{app:presicion_bounds_not_knowing_b}.
		In the presence of noise, both the product state $\ket{\mathrm{P}}$ (solid red line) and the GHZ state (solid black line) are between the QFI shown in (a) and the QFI shown in (b).  The slope of the curves represent the scaling in $N$, i.e. a $N^2$ scaling (analogue of HL), a linear scaling in $N$ (analogue of SQL) and a constant scaling in $N$ (W states $\ket{\D_{N}^1}$ for large $N$).	}\label{fig:scaling_equidistant}
\end{figure*}

\subsection{Comparison of the performance of decoherence-free states with other strategies}\label{sec:compare_strategies}

In this part we compare the performance of ODF states with (i) GHZ states in the case of full \textit{a priori} knowledge of $B_0$ but in the presence of collective phase noise, and (ii) with  Dicke states, for the estimation of gradients.
\newline
\textbf{Comparison with GHZ states:}
We compare the performance of GHZ states with optimal decoherence-free states when we have  complete information about $B_0$ and collective phase noise is present. As mentioned before, collective phase noise can be interpreted as an erasure of information about $B_0$. Therefore, the QFI for $\GHZ$ states vanishes for long measurement times $t$ and ODF states  $\ket{\ODF_{N/2}}$ with $k=N/2$ excitations perform better. In contrast, for short probing times $t$ GHZ states perform better. We can calculate the critical time $t_\mathrm{crit}$ for which the QFI for GHZ states under noise is equal to the QFI for $\ket{\ODF_{N/2}}$. In experiments for which this paper is relevant typically $t\approx\mu$s-ms and the correlation time $\tau_c\approx$s of the field fluctuations $\Delta E(t)$. Therefore, we can assume $\tau_c\gg t$ from which we get  (see Appendix~\ref{app:critical_time})
\begin{equation}
t_\mathrm{crit}=\frac{\left\{2 \log\left[ \frac{\left(\sum_{i=1}^N (x_i-x_0)\right)^2}{\left(\sum_{i=1}^{N/2}\left(x_i-x_{N-i+1}\right)\right) ^{2}}\right]\right\}^{1/2}}{N\gamma' \Delta E}\,,
\end{equation}
In the case of complete \textit{a priori} knowledge about $B_0$ at the beginning and collective phase noise GHZ states perform good for $t< t_\mathrm{crit}$. For $t>t_\mathrm{crit}$ ODF states $\ket{\ODF_{N/2}}$ outperform GHZ states.

Neutral atoms in an optical microtrap are arranged equidistant $x_i-x_0=(i-1) L/(N-1)$. For this positioning we find the critical time $t_\mathrm{crit}=2\sqrt{\log[2(N-1)/N]}/(N \gamma' \Delta E)$. This is independent of the total length $L$ of the string.
For $N=50$ qubits and $\gamma' \Delta E= 2 \pi\cdot 50\,$Hz (as used for Fig.~\ref{fig:noisyGHZ}) we find $t_\mathrm{crit}=104\,\mu$s and for $N=8$ we find $t_\mathrm{crit}=595\,\mu$s. Both are within typical coherence times of such experiments.

In Sec.~\ref{sec:full_a_priori_knowledge} we discussed the case of full \textit{a priori} knowledge about $B_0$.
Here, we found that in the absence of noise GHZ states are optimal. However, this holds only under the assumption that  $x_i\ge x_0$, this means that the whole string of qubits is on the right hand side of the position $x_0$ where $B_0$ is known. If $x_0$ is defined to be located within the range of the qubit string GHZ states are not optimal anymore (as shown in Appendix~\ref{app:maxFI}).
The experimentally relevant case is, if an experimenter has full \textit{a priori} knowledge about the offset field $B_0$
 right in the middle of the qubit string $x_{N/2}\le x_0 <x_{N/2+1}$, e.g., by estimating the average field. Interestingly, in this case the optimal states are ODF states (as shown in Appendix~\ref{app:maxFI}) that are decoherence free. Therefore, when $x_0$ is defined to be in the middle of the string it doesn't matter whether an experimenter has knowledge about the offset field. 
 This fact implies, that only if an experimenter is able to measure the offset field at a position that is not right in the middle of the string, she could gain from having information about the offset field.

 In principle \textit{a priori} knowledge about the offset field could enhance the precision for gradient estimation since the maximal QFI when having full  \textit{a priori} knowledge about the offset field in Eq.~\eqref{eq:ultimate_max FisherG}
 is by constant factor of $4$ greater than the one in the case of having no \textit{a priori} knowledge about the offset field in Eq.~\eqref{eq:max decoh fiisher location}.
However, this comparison is unfair because this enhanced precision is gained by an unknown amount of resources that was previously used to determine the off-set field $B_0$.  Furthermore, as discussed before a gradient measurement does not always gain from having \textit{a priori} knowledge about the offset field. In fact only for $t<t_\mathrm{crit}$ knowledge about the offset field enhances the precision for gradient estimation since collective phase noise immediately erases the information about the offset field.
 However, even for $t<t_\mathrm{crit}$ the gain from measuring the offset field is only a constant factor (up to $4$).
This factor can also be reached by using longer measurement times since $\QFI \propto t^2$ for ODF states (that are insensitive to the offset field). In the here considered experiments $t_\mathrm{crit}\propto\,\mu$s whereas typical measurement times $t \propto\,$ms such that $t>t_\mathrm{crit}$ as we discussed above. Then, ODF states perform better than GHZ states. Therefore, in the here considered experiments  it is not worth to spend any resources for a measurement of the offset field  for the estimation of gradients.

One possible objection to the above reasoning is that for any specified probing time $t$ there exist in principle optimal states and measurements that would give a precision for gradient estimation higher than the one for optimal DFS states [given in Eq.~\eqref{eq:precODF}]. The technical limitation of such a scheme is that the optimal states and measurements depend on the probing time which results in experimental difficulties. On the other hand, the optimal DFS states    and the corresponding measurements have already been implemented in  experiments \cite{Monz2011}.
\newline

\textbf{Gradient estimation with Dicke states:}
In Ref.~\cite{Zhang2014} it was claimed that for gradient estimation with a W state a good scaling of the QFI in $N$ is possible. Recall that W states are decoherence-free states  and belong to the set of symmetric Dicke states \cite{Dicke1954} $\ket{\D^{k}_N}= \frac{1}{\NN}\sum_j \PP_j\{\ket{0}^{\otimes N-k}\otimes\ket{1}^{\otimes k}\}$,
where $\NN$ is a normalization constant and  $\sum_j \PP_j\{.\}$ denotes the sum over all possible permutations. Symmetric Dicke states with $k=1$ excitations are exactly W states.
We use Eq.~\eqref{eq:simpl QFI} to compute the QFI for Dicke  states $\D_{N}^k \coloneqq \kb{\D^{k}_N}{\D^{k}_N}$ [see Appendix~\ref{app:product_state_steady}, Eq.~\eqref{app:eq:QFI_Dicke} for details] and the final result is    
\begin{align}
\begin{split}
\QFI(\D_{N}^k)/(\gamma t)^2
&= \sum_{i=1}^N x_i^2-\left( \sum_{i=1}^N x_i\right)^2 \frac{\left(2k-N\right)^2}{ N^2}\\&+\sum_{i \neq j=1}^N x_ix_j\left[\frac{(2k-N)^2-N}{N(N-1)}\right]\ .
\end{split}
\end{align}
For equidistant positioning $x_i=(i-1)\frac{L}{N-1}$ we have
\begin{align}
\QFI(\D_{N}^k)= (\gamma t)^2 \left(\frac{L}{N-1}\right)^2  \frac{(N+1) k (N-k)}{3},
 \end{align}
which is maximal for $k=N/2$,   
  \begin{align}
\QFI(\D_{N}^{N/2})=(\gamma t)^2\left( \frac{ L}{N-1}\right)^2  \frac{N^2 (N+1)}{12}\,.
  \end{align}
For W states ($k=1$) we have
\begin{align}\label{eq:QFI_W}
\QFI(\D_{N}^{1})= (\gamma t)^2 \left(\frac{L}{N-1}\right)^2  \frac{N^2-1}{3}\ .
 \end{align}   
This is exactly the same result as the one from Ref.~\cite{Zhang2014} with $a=L/(N-1)$. In Ref.~\cite{Zhang2014} $a$ is defined as a fixed distance between the qubits with $x_i=(i-1) a$, such that adding a qubit leads to an extension of the total length $L$ of the string. Using this convention the QFI for W states in Eq.~\eqref{eq:QFI_W} scales with $\QFI \propto a^2 N^2$ for large $N$. At first sight this seems to be a good scaling since it is quadratic in $N$. However, when fixing the distance between the qubits $a$ the HL from Eq.~\eqref{eq:HL_full} for gradient estimation is $\Delta^2 \tilde{G} \propto 1/N^4$ and the SQL from Eq.~\eqref{eq:SQL_full} is $\Delta^2 \tilde{G}  \propto 1/N^3$ for large $N$ and with $L=(N-1)a$. Therefore, a quadratic scaling in $N$ is not a good scaling for a fixed distance between the qubits.
Furthermore, when fixing the total length $L$ the QFI for W states decreases with $N$ to a constant $\QFI\left(\D_{N}^{1}\right)\xrightarrow{N\rightarrow\infty}  (\gamma t)^2 L^2 /3$.
The product state $\ket{\mathrm P}$ in the steady state regime in Eq.~\eqref{eq:product_equi}
 performs better then a W state in Eq.~\eqref{eq:QFI_W} and is for large $N$ equal to the maximal attainable QFI with symmetric Dicke states ($k=N/2$).

 We conclude the section with the graphical comparison of the performance of different families of states for  gradient estimation in Fig.~\ref{fig:scaling_equidistant}, under the assumption of (a) full or (b) no \emph{a priori} knowledge about the offset field $B_0$.

\section{Generalization} \label{sec:generalization} 
The model described in Section~\ref{sec:gradient_estimation_theory} can be generalized to an arbitrary known spatial distribution $f(x)$ of the $z$ component of the magnetic field. We can consider an experiment for the estimation of the strength $G$ of a spatial magnetic field distribution given by 
\begin{equation}
B(x)=B_0+G f(x-x_0) \ ,
\end{equation}
where the function $f(x)$ is known and $f(x_0)=0$ holds. E.g., due to the quadratic Zeeman effect it may be known that the field has to be quadratic in $x$ such that $f(x-x_0)=(x-x_0)^2-(x-x_0)a$ as depicted in Fig.~\ref{fig:particles_in_field}.
The Hamiltonian in Eq.~\eqref{eq:hamiltonian} is then generalized by replacing $x_i-x_0$ by $f(x_i-x_0)$, for $i=1,\ldots,N$. The labeling of the particles is then imposed by the ordering of the values of the magnetic field i.e.  $B(x_i)\le B(x_{i+1})$. Under these slight modifications essentially all the results presented in this paper carry over. In particular, the bounds on the precision given by the QFI in Eq.~\eqref{eq:max FisherG} and Eq.~\eqref{eq:supremum separable} for the case of full \textit{a priori} knowledge of $B_0$ and in  Eq.~\eqref{eq:optimalDFS} for the case of no \textit{a priori} knowledge are valid in this generalized model.

\begin{figure}[]
\begin{center}
\includegraphics[width=0.45\textwidth]{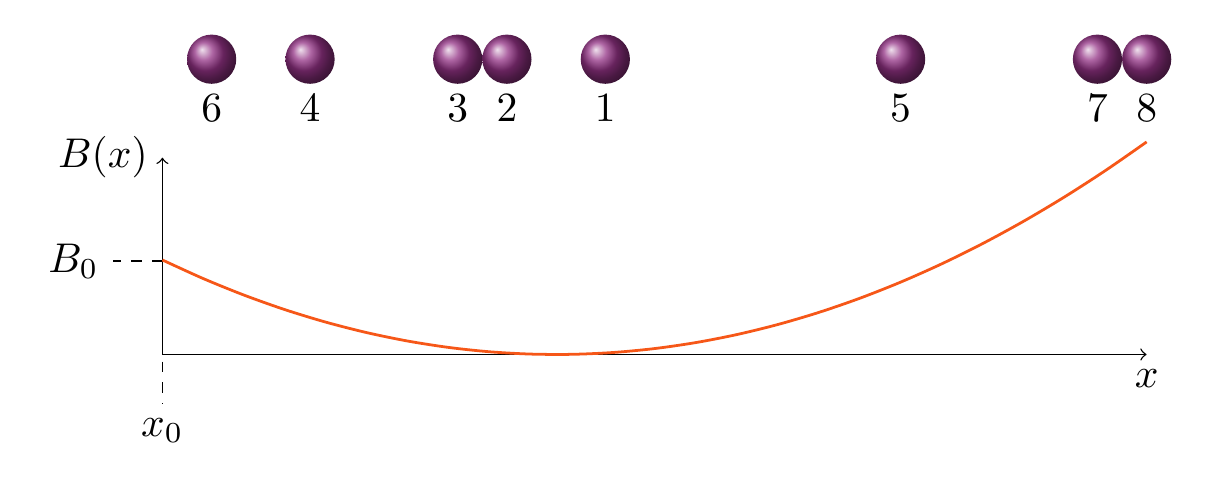}
\end{center}
\caption{A string of particles in a magnetic field with a spatial distribution along the string.  The magnetic field $B(x)$ acts on each particle, depending on its position. The particles are labeled such that the smallest magnetic field acts on the first and the highest magnetic field on the last particle.
 }\label{fig:particles_in_field}
\end{figure}

Also the optimal states and the optimal measurements attaining these bounds do not change.
Furthermore, the optimal positioning for the case of full \textit{a priori} knowledge about $B_0$ is to put all qubits at the position $x_{\max}$ that maximizes $f(x-x_0)$. For the case of no \textit{a priori} knowledge about $B_0$ it is optimal to put half of the qubits at the position $x_{\min}$ that minimizes $f(x-x_0)$ and the other half of the qubits at the position $x_{\max}$ that maximizes $f(x-x_0)$.
\newline
Note that one has to keep in mind that the above analysis is valid under the assumption $f(x-x_0)\geq 0$ (which corresponds to the condition $x_i \geq x_0$ from the note given in the end of Sec.~\ref{sec:gradient_estimation_theory}). 
As depicted in Fig.~\ref{fig:particles_in_field}  the assumption $f(x-x_0)\geq 0$ does in general not always hold.
If this assumption is dropped all the results for the case of no \textit{a priori} knowledge about $B_0$ (decoherence free states) still carry over. However, for the case of full \textit{a priori} knowledge about $B_0$ the precise form of the optimal states and formulas for maximal QFI change.  Although one can still recover   the results by substituting $x_i-x_0$ by $f(x_i-x_0)$ in the appropriate formulas given in Appendix~\ref{app:maxFI}.

\section{Conclusions}
\begin{figure*}
\includegraphics[width=\textwidth]{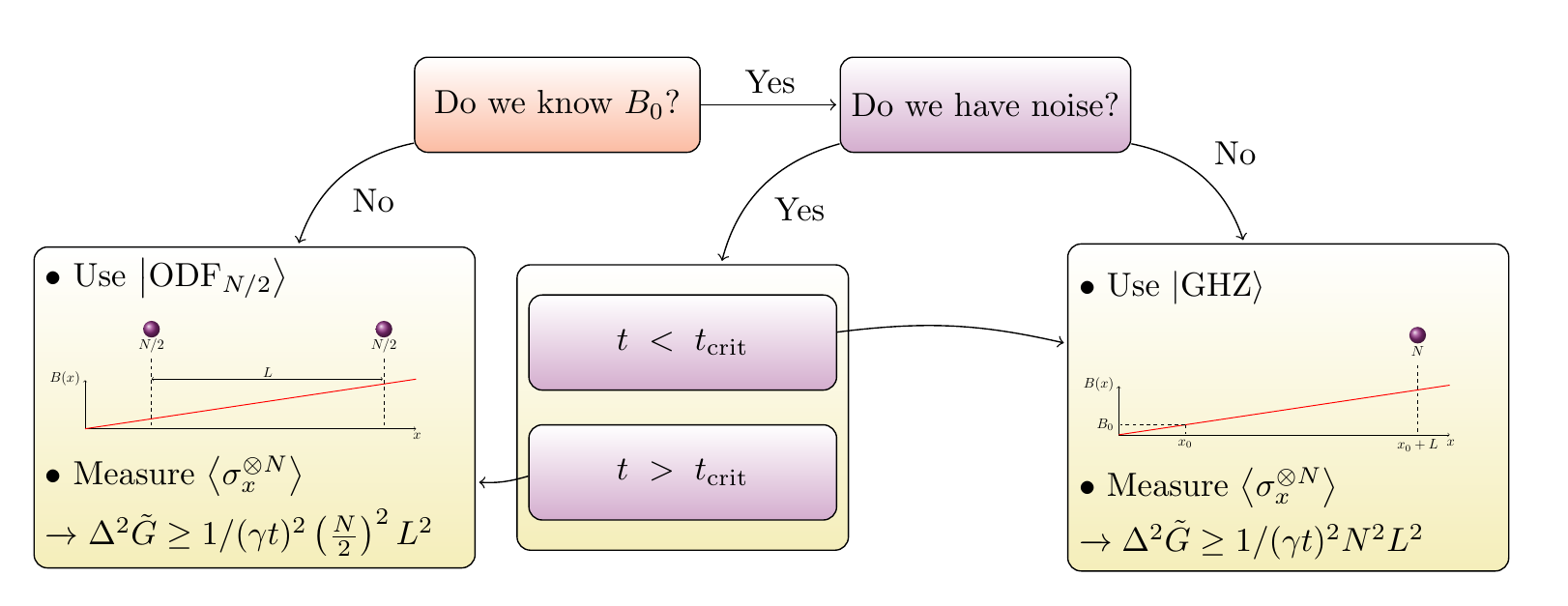}
\caption{
Decision diagram describing how to perform the best gradient measurement. First,  we distinguish between the case of full \textit{a priori} and the case of no \textit{a priori} knowledge about the offset field $B_0$.
When no noise is present, we derived the optimal state, optimal positioning and optimal measurements for both cases. Furthermore, we derived the ultimate limits in precision for both cases.
For the case of full \textit{a priori} knowledge about the offset field $B_0$ we investigated the influence of collective phase noise to the precision bounds and found that this noise source can be interpreted as an erasure of information about $B_0$. Therefore, from the set of investigated states, the GHZ state performs best for $t<t_{\mathrm{crit}}$ and the ODF state with $k=N/2$ performs best for $t> t_{\mathrm{crit}}$.
}\label{fig:flowchart}
\end{figure*}

\begin{table*}
\begin{tabular}{|M{2.5cm}|M{2.5cm}|M{4.8cm}|M{2cm}|M{3.5cm}|N}
\hline
&& general & optimal positioning & equidistant spacing\\
\hline
\multirow{2}{*}{$B_0$ is known} & $\ket \GHZ$ & $ (\gamma t)^2\left[\sum_{i=1}^{N}(x_i-x_0)\right]^{2}$&$(\gamma t)^2  L^2 N^2$ & $(\gamma t)^2 L^{2}\frac{N^{2}}{4}$&\\[20pt]
\cline{2-5}
 & $\ket{\mathrm P}$ & $(\gamma t)^2\sum_{i=1}^{N}(x_i-x_0)^{2}$&$(\gamma t)^2 L^2 N$ &$(\gamma t)^2 L^2  \frac{N(2N-1)}{6(N-1)}$ &\\[20pt]
\hline
\multirow{2}{*}{$B_0$ is not known} & $\ket{\ODF_{N/2}}$ & $(\gamma t)^2\left[\sum_{i=1}^{N/2}\left(x_i-x_{N-i+1}\right)\right]^{2}$& $(\gamma t)^2 L^{2}\frac{N^2}{4}$ &$(\gamma t)^2 L^{2}\frac{N^4}{16\left(N-1\right)^{2}} $ & \\[20pt]
\cline{2-5}
& $\ket{\mathrm P}$ &$(\gamma t)^2\left[\sum_{i=1}^N x_i^2-\frac{1}{N} \left( \sum_{i=1}^N x_i\right)^2\right]$ & $ (\gamma t)^2 L^2 \frac{N}{4}$& $(\gamma t)^2 L^2 \frac{N(N+1)}{12(N-1)}$ & \\[20pt]
\hline
\end{tabular}
\caption{QFI for different states and scenarios considered in this work. For the case of full \textit{a priori} knowledge about the offset field $B_0$ GHZ states [Eq.~\eqref{eq:def:GHZ}] and from the set of separable states $\ket{\mathrm P}$ [Eq.~\eqref{eq:def:product}] are optimal.
For the case of no \textit{a priori} knowledge about the offset field $B_0$ ODF states [Eq.~\eqref{eq:optstatesubspace}] with $k=\frac{N}{2}$ are optimal and are here compared to the separable state $\ket{\mathrm P}$. In this table we list the QFI for a fixed positioning (general), for the optimal positioning, and for equidistant spacing. }\label{tab:overview}
\end{table*}
We presented a systematic analysis of the ultimate limits in precision for the estimation of a gradient of a spatially-varying magnetic field in systems of cold atoms and trapped ions. The position degrees of freedom were treated classically and  taken as fixed. We used the framework of quantum  metrology to study two extreme scenarios: (i) the case when the  magnetic offset field is known and (ii) the case, where the magnetic offset field is not \textit{a priori} known.

For the first case (i) we have introduced the bounds in precision for gradient estimation analogous to the standard quantum limit (maximal possible accuracy with separable states) and the Heisenberg limit (maximal possible accuracy with entangled states) known from the usual phase estimation scenario. Moreover, we have identified the optimal probe state, that is a GHZ state [see Eq.~\eqref{eq:def:GHZ}]. It is then optimal to put all qubits as far away as possible from the point $x_0$, where the magnetic offset field is known. This leads to a magnetic field measurement, similar to a magnetic offset field measurement, but at a different place.

For the second case (ii) we found that GHZ states are completely useless ($\QFI=0$) for the estimation of a magnetic field gradient. In the absence of knowledge about $B_0$ effective super-selection rules restrict the class of allowed states to decoherence-free states. We proved that the decoherence-free state given in Eq.~\eqref{eq:optstatesubspace} with $k=N/2$ excitations is optimal and does not depend on the positions of the qubits.  Here, the optimal positioning is, to put half of the qubits at one place and the other half as far away as possible. We also showed that the performance of optimal decoherence-free states is generically comparable to  optimal GHZ states in the case of complete knowledge about $B_0$ -- both scale with $N^2$. Both optimal states can be prepared with high fidelities in experiments with trapped ions up to $N=14$ and cold atoms up to $N=2$.

For both scenarios, we identified the parity measurement in the $x$ basis as the optimal measurement saturating the quantum Cram{\'e}r-Rao bounds for gradient estimation. This measurement  is feasible in experiments considered in this work, as for the positions of the particles can be considered fixed and local measurement of $\sigma_x$ can be easily performed. 

Finally, we investigated the effect of collective phase noise. Collective phase noise can be interpreted as an erasure of knowledge about the magnetic offset field and continuously interpolates between scenario (i) and (ii) for strong noise or rather long probing times.
We found a critical time $t_\mathrm{crit}$ for which the GHZ state performs as good as the ODF state with $k=N/2$ excitations. For $t<t_c$ GHZ states perform better than ODF states and for $t>t_c$ ODF states outperform GHZ states.  These results are summarized in the decision diagram in Fig.~\ref{fig:flowchart}. Values of the QFI for different positioning and different states in the two cases (i) and (ii) are summarized in Table~\ref{tab:overview}.

We derived Cram\'{e}r-Rao bounds for gradient estimation from the QFI and discussed their saturation with FI. Such a saturation implies an unlimited amount of statistics and therefore many repetitions of a measurement.  However, realistic experiments are limited in measurement time and therefore limited in the amount of possible repetitions. In such a scenario, a proper analysis of bounds in precision can be performed in a Bayesian estimation approach. For the standard scheme [as depicted in Fig. \ref{fig:shemes}(a)] it was shown in Ref. \cite{Berry2000} that only the bound $\Delta^2 \varphi \ge \pi^2/N^2$ can be saturated with limited statistics in contrary to the bound $\Delta^2 \varphi \ge 1/N^2$ from the QFI. An investigation of bounds from a Bayesian approach for the estimation of gradients would be interesting for further work.
Moreover, it would be interesting to take the uncertainty in positioning of the qubits 
into account. In fact, independently from our work, an article on
gradient estimation with systems of atoms with probability distributions in position
has appeared \cite{Apellaniz2017}. 
For such a system also weak value measurements could offer an enhancement in precision for the estimation of gradients 
\cite{Aharonov1988}.
Furthermore, in certain setups there is a coupling between the internal and external 
degrees of freedom, i.e. the spin and the position \cite{Mintert2001}. This requires an adaption
of our ideas. Also, the investigation of precision limits and optimal strategies for simultaneous estimation
of many parameters describing a field (i.e., the offset field B 0 , the gradient G, and higher
derivatives), could be interesting (especially in the presence of collective dephasing noise \cite{Dorner2012,Dorner2013}).   A first step in this direction, however without considering the effect of noise has been done in Ref.~\cite{Apellaniz2017}.
Finally, another interesting topic for further studies is the performance of random multiparticle states \cite{RandSTAT} for gradient estimation.

\section*{Acknowledgements}

We are grateful to
Antonio Ac\'{i}n,
Iagoba Apellaniz,
Michael Johanning, 
Janek Ko\l ody\'{n}ski, 
Morgan Mitchell,
Christina Ritz,
and Gael Sent\'{i}s
 for interesting and fruitful discussions.
This work has been supported by the European Research Council (Consolidator Grants TempoQ/683107
and QITBOX), Spanish MINECO (QIBEQI FIS2016-80773-P,  and Severo Ochoa Grant No. SEV-2015-0522), Fundaci\'{o} Privada Cellex, Generalitat de Catalunya (Grant No. SGR 874, 875, and CERCA Programme), the Friedrich-Ebert-Stiftung, the FQXi Fund (Silicon Valley Community Foundation), and the DFG. M.O acknowledges the support of Homing programme of the Foundation for Polish Science co-financed by the European Union under the European Regional Development Fund.

\begin{appendix}
	
\section{Maximal QFI and optimal states for arbitrary position $x_0$}\label{app:maxFI}	

In this part we derive the maximal QFI for gradient estimation for an arbitrary position of $x_0$, in which the offset field is assumed to be  perfectly known. Let us first introduce the auxiliary notation $f_i \coloneqq x_i-x_0$.  Moreover, just like in the main text let us label the qubits in such a way that $f_{i+1} \geq f_i$. By the virtue of Eq.~\eqref{eq:simpl QFI} we can focus on maximizing  $\QFI[\state,H_G ]$, where 
\begin{equation}\label{eq:HGham}
H_G = \sum_{i=1}^N f_i \frac{\sigma^{(i)}_z}{2}.
\end{equation}
Let us note that the above Hamiltonian can be diagonalized by the computational basis 
\begin{equation}
\ket I=\ket{i_{1}}\ket{i_{2}}\ldots\ket{i_{n}}\ , \ \text{with}\ i_{k}\in\left\{ 0,1\right\} \ . 
 \end{equation}
The symbol $I$  labels the set of positions of particles which are in the state $\ket{0}$ and is formally defined by
\begin{equation}
I=\left\{ l \left|\, i_{l}=0\right.\right\}\ .
\end{equation}
Note that $I$   can be arbitrary (in particular it can be also the empty set). The eigenvalue corresponding to the eigenvector $\ket I$ is given by 
\begin{equation}
\lambda_{I}=\frac{1}{2}\sum_{i\in I}f_{i}-\frac{1}{2}\sum_{i\in\bar{I}}f_{i}\,,
\end{equation}
 where $\bar{I}$  denotes the complement of the set $I$ in the set $\lbrace 1,2,\ldots,N\rbrace$.  The maximal eigenvalue $\lambda_{\max}$ is given by
\begin{equation}
\lambda_{\max} = \frac{1}{2} \sum_{i=1}^N |f_i|
\end{equation} 
with the corresponding eigenvector that is $\ket{I_{\max}}$, where
\begin{equation}
I_{\max} = \left\{ l\left| f_l\geq 0 \right.\right\}\,.
\end{equation}
Let $m$ be the minimal number with $f_m \leq 0$, that is the number of particles on the left side of $x_0$ such that $x_m\le x_0 < x_{m+1} $. Now, using the ordering $f_{i+1}\geq f_i$ we find 
\begin{equation}
\ket{I_{\max}}=\ket{1}^{\ot m} \ot \ket{0}^{\ot N-m}\, .
\end{equation}
Because of $\lambda_{\bar{I}}= - \lambda_{I}$ we get $\lambda_{\min}= -\lambda_{\max}$ and
\begin{equation}
\ket{I_{\min}}=\ket{0}^{\ot m} \ot \ket{1}^{\ot N-m}\, .
\end{equation}
Finally, by virtue of Eq.~\eqref{eq:simpl QFI}, Eq.  \eqref{eq:maxfisch}, and Eq.~\eqref{eq:optSTATE}, we obtain
\begin{equation}
\max_{\state\in\D(\H_N)} \QFI (\state)=(\gamma t)^2 \left[\sum_{i=1}^N |f_i|\right]^2\, ,
\end{equation}
with the optimal state given by
\begin{equation}
\ket{\Psi_m}=\frac{1}{\sqrt{2}}\left(\ket{1}^{\ot m} \ot \ket{0}^{\ot N-m}  +\ket{0}^{\ot m} \ot \ket{1}^{\ot N-m}\right)\ .
\end{equation}
Note that for $m=N/2$ this state happens to be the optimal decoherence-free state $\ket{\ODF_{N/2}}$.
	
\section{Parity measurements}\label{app:FI}
 
In this part of the Appendix we compute expectation values of the parity operator $\hat{M}=\sigma_{x}^{\ot N}$ on the families of quantum states investigated in this paper. These computations are relevant for the computations involving the classical Fisher information and the error propagation formula given the main text.
 
\subsection{Parity expectation value for GHZ states}\label{subsub:GHZ}
Recall that $\ket{\GHZ}=\frac{1}{\sqrt{2}}( \ket{0}^{\ot N} + \ket{1}^{\ot N})$. Using the identities
\begin{equation}
U_G \ket{0}^{\ot N}\!=\! \exp\!\!\left[-\frac{\ii}{2}\!\left(N \gamma B_0 t+\gamma G t \sum_{i=1}^N (x_i -x_0)\!\!\right)\!\right] \ket{0}^{\ot N},
\end{equation}
\begin{equation}
U_G \ket{1}^{\ot N}\!=\! \exp\!\!\left[\frac{\ii}{2}\!\left(N \gamma B_0 t+\gamma G t \sum_{i=1}^N (x_i -x_0)\!\!\right)\!\right]  \ket{1}^{\ot N},
\end{equation}
and the property $\hat{M} \ket{0}^{\ot N}= \ket{1}^{\ot N}$ we obtain 
\begin{equation} \label{eq:ParityGHZ}
\tr\left(U_G \psi_{\GHZ} U^\dagger_G  \hat{M} \right)=\!\cos\!\!\left[ N \gamma B_0 t+\!\gamma t G \sum_{i=1}^N (x_i-x_0)\! \right]. 
\end{equation}
\subsection{Parity expectation value for noisy GHZ states}\label{app:Parity_noisyGHZ}
Let $\psi_{\GHZ,\theta}\coloneqq\kb{\GHZ_\theta}{\GHZ_\theta}$, where 
\begin{equation}
\ket{\GHZ_\theta}=\frac{1}{\sqrt{2}}\left(\ket{0}^{\ot N}+ \exp(\ii \theta) \ket{1}^{\ot N}  \right) \ .
\end{equation}
In this part we compute the expectation value of the parity $\sigma_{x}^{\ot N}$ on the noisy state $\rho:=U_G \bar{\psi}_{\GHZ,\theta}(t) U^\dagger_G$.  Using Eq.~\eqref{eq:evolvedGHZ} we find
\begin{align}
\begin{split}
\rho&=d(t) U_G \psi_{\GHZ,\theta} U^\dagger_G \\
&+ \left(1-d(t)\right)\frac{1}{2}\left[ (\kb{0}{0})^{\ot N}  + (\kb{1}{1})^{\ot N}\right] \ , 
\end{split}
\end{align}
where $d(t)$ is given below Eq.~\eqref{eq:evolvedGHZ} in Appendix~\ref{app:GHx_under_noise}. From the above expression and using $\hat{M}\ket{0}^{\ot N}= \ket{1}^{\ot N}$, we get 
\begin{equation}
\tr\left(\rho \hat{M}\right)=d(t)\tr\left(U_G \psi_{\GHZ,\theta} U^\dagger_G \hat{M}\right)\ . 
\end{equation}
Finally, repeating essentially the same computations as the ones from Section~\ref{subsub:GHZ} we obtain 
\begin{equation}\label{eq:ParityGHZnoise}
\tr\left(\rho \hat{M}\right)=d(t)\cos\left[ N \gamma B_0 t+ \gamma t G \sum_{i=1}^N (x_i-x_0)+\theta \right]\ .
\end{equation}
\subsection{Parity expectation value for optimal decoherence-free states}
Repeating the analogous computations to these given in Appendix~\ref{subsub:GHZ}  for 
\begin{equation}
\ket{\ODF_{N/2}}=\frac{1}{\sqrt{2}}\left(\ket{1}^{\otimes N/2}\otimes \ket{0}^{\ot N/2} +\ket{0}^{\ot N/2}  \otimes \ket{1}^{\otimes N/2}  \right)
\end{equation}
we obtain
\begin{equation}
U_G \ket{a}= \exp{\left[-\frac{\ii}{2}\left(\gamma t G \sum_{i=1}^{N/2}\left(x_i-x_{N-i+1}\right) \right)\right]} \ket{a}\ ,
\end{equation}
\begin{equation}
U_G \ket{b}= \exp\left[\frac{\ii}{2}\left(\gamma t  \sum_{i=1}^{N/2}\left(x_i-x_{N-i+1}\right) \right)\right] \ket{b} \ ,
\end{equation}
where we denoted $\ket{a}:=\ket{1}^{\otimes N/2}\otimes \ket{0}^{\ot N/2}$ and $\ket{b}:=\ket{0}^{\ot N/2} \ot \ket{1}^{\otimes N/2}$.
 Using the above expressions together with the identity $\hat{M}\ket{b} =\ket{a}$ we obtain
 
\begin{equation}
\tr\left(U_G \psi^{N/2}_{\ODF} U^\dagger_G  \hat{M} \right)=\cos\left(\gamma t \sum_{i=1}^{N/2}\left(x_i-x_{N-i+1}\right) G \right) \ ,\label{eq:parity_expect_ODF}
\end{equation}
where $\psi^{N/2}_{\ODF}=\kb{\ODF_{N/2}}{\ODF_{N/2}}$.
 
\section{Classical Fisher information for $J_x$ measurement}\label{app:JXmes}
Analogous computations to the ones performed in Section~\ref{sec:full_a_priori_knowledge} can be performed to show that the classical Fisher Information for the measurement of the projective POVM $\PPP_{J_x}$, associated with the eigenspaces of $J_x$, also gives the QFI for the optimal states. More precisely
\begin{equation}\label{eq:EqualityFI}
\FI\left(\state_G, \PPP_{J_x}\right)= \FI\left(\state_G, \PPP \right)\ ,
\end{equation}
for $\state = \psi_{\mathrm{GHZ}}$ and  $\state=\psi^{N/2}_{\ODF}$. 

This result can be also derived from the monotonicity of QFI under coarse-graining, i.e., 
\begin{equation}\label{eq:monotonicityFI}
\FI\left(\state_G, \lbrace M_i\rbrace\right)\geq  \FI\left(\state_G, \lbrace N_i\rbrace\right)\ ,
\end{equation}
where $\lbrace N_i \rbrace$ is a POVM obtained by coarse-graining of a POVM  $\lbrace M_i \rbrace$ i.e for every outcome $i$ we have  $N_i =\sum_{j} q(i|j) M_j$ for some stochastic matrix $q(i|j)$  (we call a matrix $q(i|j)$ stochastic if and only if $q(i|j)\geq 0$ and for every $j$ we have $\sum_i q(i|j)=1$). A measure $\PPP$, describing the measurement of the parity in $x$ basis $\sigma_x ^{\otimes N}$, can be obtained as follows: First, measuring the projective measurement $\PPP_{J_x}$. Second, output "+1" or "-1" depending on the number of excitations contributing to the observed eigenvalue of $J_x$. Therefore, $\PPP$ is coarse-graining of  $\PPP_{J_{x}}$ and thus, by the virtue of Eq.~\eqref{eq:monotonicityFI}  we obtain Eq.~\eqref{eq:EqualityFI} for arbitrary states $\state$.

\section{Computations of error-propagation formula for GHZ states in presence of noise}\label{app:EPF_noisyGHZ}

The aim of this section is to show that for a suitable chosen value of the initial relative phase in a state $\ket{\GHZ_\theta}$ it is possible to saturate the quantum Cram\'{e}r-Rao bound with the measurement of $\hat{M}=\sigma_{x}^{\ot N}$, even in the presence of collective phase noise. Our reasoning essentially mimics the one given in the previous section. Setting $\psi_{G,\mathrm{noise}}=U_G \bar{\psi}_{\GHZ,\theta} U^{\dagger}_G$ and using  \eqref{eq:ParityGHZnoise} we obtain 
\begin{equation}
\Delta_{\psi_{G,\mathrm{noise}}}^2 \hat{M}=1-d(t)^2\cos^2\left[ \alpha(t)\right] \ ,
\end{equation}
with $\alpha(t):= N\gamma B_0 t+\gamma G t \sum_{i=1}^N (x_i-x_0)+\theta$.
Using this formula in the error propagation formula given in Eq.~\eqref{eq:error_propagation_formula} we get 
\begin{align}
\begin{split}
\Delta^2 \tilde{G}_{\hat{M}}&=\frac{1-d(t)^2\cos^2\left[\alpha(t)\right]}{\left[d(t)\gamma t \sum_{i=1}^N (x_i-x_0) \right]^2 \sin^2\left[\alpha(t)\right]}\, ,\\
&=\frac{1+\left[1-d(t)^2\right] \mathrm{cot}^2\left[\alpha(t)\right]}{\left[d(t)\gamma t \sum_{i=1}^N (x_i-x_0) \right]^2 } \, .
\end{split}
\end{align}
From this we see that the quantum Cram\'{e}r-Rao bound [in this case given by the inverse of the QFI in Eq.  \eqref{eq:QFI_noisy_GHZ}] is saturated for $\mathrm{cot}\left[\alpha(t)\right]=0$ (for a suitable choice of the initial phase $\theta$).

\section{Optimal states in presence of collective phase noise} \label{app:optimal_under_noise}
In this section we will assume full \textit{ a priori} knowledge about $B_0$. We calculate the QFI first for GHZ states in presence of noise and then for optimal states for arbitrary position $x_0$  in presence of noise.
\subsection{GHZ states in presence of collective phase noise} \label{app:GHx_under_noise}
Due to the noise, the probe state $\varrho$ evolves into a mixed state $\bar{\state}(t)\coloneqq \langle U_{\mathrm{noise}} \state U^\dagger_{\mathrm{noise}}\rangle$ where $U_\mathrm{noise}=\mathrm{exp}\left[-\ii\gamma' \int_0^t\Delta E(\tau) \mathrm{ d}\tau J_z\right]$ describes the noise acting on the system.
The diagonal entries of the probe state do not change $\bar{\varrho}(t)_{ii}=\varrho_{ii}$. However, the off-diagonal ones do. The GHZ state has only two non-zero off-diagonal entries $\varrho_{0,q}=(\varrho_{q,0})^\dagger$, where $q=2^N-1$ for a state of dimension $2^N$. For these entries we find \cite{Monz2011}
\begin{align}
\begin{split}
\left[U_{\mathrm{noise}} \state U^\dagger_{\mathrm{noise}}\right]_{0,q}&= \mathrm{exp}\left[-\ii\gamma' N \int_0^t\Delta E(\tau) \mathrm{ d}\tau \right]\state_{0,q}\, .
\end{split}
\end{align}
Now we use the fact that $\langle \exp[\pm\ii\delta\varphi]\rangle=\exp[-\frac{1}{2}\Delta^2\delta\varphi]$ for an unbiased Gaussian distribution of $\delta\varphi$ that means that $\braket{\delta \varphi}=0$.
With the fact that the variance $\Delta^2\delta\varphi=\braket{\delta \varphi^2}$ we can calculate 
 \begin{align}
 \left\langle \exp\left[\pm\ii\gamma' N \int_0^t\Delta E(\tau) \mathrm{ d}\tau \right]\right\rangle=\exp\left[-\frac{1}{2}(\gamma' N)^2 C(t)\right]
 \end{align}
with $C(t)=\langle\int_0^t\mathrm{ d}\tau_1\int_0^t\mathrm{ d}\tau_2\Delta E(\tau_1) \Delta E(\tau_2) \rangle$.
Substituting $t_1=\tau_1-\tau_2$ and $t_2=\tau_1+\tau_2$ and using $\braket{E(t+\tau)E(t)}=\braket{E(\tau)E(0)}$ and $\braket{\Delta E(t) \Delta E(0)}=(\Delta E)^2 \exp\left[-t/\tau_c \right]$ we find 
\begin{align}
C(t)=2(\Delta E \tau_c)^2 \left(\exp(-t/\tau_c)+t/\tau_c -1\right)\,.
\end{align}
Together, we find the $N$-particle GHZ state evolves in presence of collective phase noise into the state 
\begin{align}\label{eq:evolvedGHZ}
\begin{split}
\bar{\varrho}(t)&=\frac{1}{2} \kb{0^{\otimes N}}{0^{\otimes N}}+\frac{1}{2} \kb{1^{\otimes N}}{1^{\otimes N}}\\&+ \frac{d(t)}{2} \kb{0^{\otimes N}}{1^{\otimes N}}+ \frac{d(t)}{2}\kb{1^{\otimes N}}{0^{\otimes N}}
\end{split}
\end{align}
with $d(t)=\mathrm{exp}\left[-\left(N\gamma' \Delta E \tau_c\right)^2 \left(\exp(-t/\tau_c)+t/\tau_c -1\right)\right]$. 
This state in its eigendecomposition is given by the non-zero eigenvalues $\lambda_{\pm}=\frac{1}{2}(1 \pm d(t))$ and the corresponding eigenvectors $\ket{v_{\pm}}=\frac{1}{\sqrt{2}}\left(\ket{0 \cdots 0} \pm \ket{1 \cdots 1}\right)$. 
In order to compute the QFI for a noisy GHZ state and for estimating $G$ we use Eq.~\eqref{eq:simpl QFI} and Eq.~\eqref{eq:QFI}, with the final result 
\begin{align}
\begin{split}
\QFI(\bar{\state}(t))&= \frac{(\lambda_+-\lambda_-)^2}{\lambda_++\lambda_-}(\gamma t)^2 |\bk{v_+|\sum_{i=1}^N (x_i -x_0) \sigma_z^{(i)}}{v_-}|^2\\&=  d(t)^2 (\gamma t)^2 \left[\sum_i (x_i-x_0)\right]^2 \ ,
\end{split}
\end{align}
where all other terms in Eq.~\eqref{eq:QFI} vanish since 
\begin{equation}
\sum_{i=1}^N (x_i-x_0) \sigma_z^{(i)} \ket{v_+}=\sum_{i=1}^N (x_i-x_0) \ket{v_-} \ .
\end{equation} 
 
\subsection{Optimal states for arbitrary position $x_0$ in presence of noise}\label{app:opt_state_under_noise}
Let $m$ be the minimal number with $f_m \leq 0$, that is the number of particles on the left side of $x_0$.
Then, in Appendix  \ref{app:maxFI} we showed that the optimal state is given by
\begin{equation}
\ket{\Psi_m}=\frac{1}{\sqrt{2}}\left(\ket{1}^{\ot m} \ot \ket{0}^{\ot N-m}  +\ket{0}^{\ot m} \ot \ket{1}^{\ot N-m}\right)\ .
\end{equation}
Following the calculations from the previous section we find the averaged state for a given time $t$
\begin{align}
\begin{split}
\bar{\varrho}(t)&=\frac{1}{2} \kb{1^{\otimes m},0^{\otimes N-m}}{1^{\otimes m},0^{\otimes N-m}}\\
&+\frac{1}{2} \kb{0^{\otimes m},1^{\otimes N-m}}{0^{\otimes m},1^{\otimes N-m}}\\
&+\frac{d_m(t)}{2} \kb{1^{\otimes m},0^{\otimes N-m}}{0^{\otimes m},1^{\otimes N-m}}\\
&+\frac{d_m(t)}{2} \kb{0^{\otimes m},1^{\otimes N-m}}{1^{\otimes m},0^{\otimes N-m}},
\end{split}
\end{align}
with 
\begin{align}
d_m(t):=\mathrm{exp}\left[-(N-2m)^2\left(\gamma'\Delta E \tau_c\right)^2 \left(\!\mathrm{e}^{- \frac{t}{\tau_c}}+\frac{t}{\tau_c} -1\right)\right].
\end{align}
In the cases of $m=0$ and $m=N$ the optimal state is a GHZ state $\ket{\Psi_0}=\ket{\Psi_N}=\ket{\GHZ}$ and we find $d_0(t)=d_N(t)=d(t)$.
The non-zero eigenvalues of $\bar{\varrho}(t)$ are given by $\lambda^m_\pm (t)=\frac{1\pm d(m,t)}{2}$ with the corresponding eigenvectors
\begin{equation}
\ket{v_\pm^m}=\frac{1}{\sqrt{2}}\left(\ket{1}^{\ot m} \ot \ket{0}^{\ot N-m} \pm\ket{0}^{\ot m} \ot \ket{1}^{\ot N-m}\right).
\end{equation}
Now we can use the fact that 
\begin{equation}
\sum_{i=1}^N (x_i-x_0) \sigma_z^{(i)} \ket{v^m_+}=\left(\sum_{i=1}^N |x_i-x_0| \right)\ket{v^m_-} \ .
\end{equation} 
 to evaluate the QFI for estimating $G$ that is given by
\begin{align}
\begin{split}
F_Q&= \frac{(\lambda^m_+ -\lambda^m_-)^2}{\lambda^m_+ +\lambda^m_-}(\gamma t)^2 |\bk{v_+^m|\sum_{i=1}^N (x_i-x_0) \sigma_z^{(i)}}{v_-^m}|^2\,\\
&=(\gamma t)^2 d_m(t)^2 \left(\sum_{i=1}^N |x_i-x_0| \right)^2.
\end{split}
\end{align}

\section{Optimal product state in the steady state regime}\label{app:product_state_steady}
In the noiseless case, the product state $\ket{\mathrm P}$  is the best classical probe state. We now want to understand, what noise (loosing information about $B_0$) does to the scaling for this state.
The state $\bar{\psi}_P (t\rightarrow\infty)=\sum_{k=0}^N p_k \ket{\D_N^k}\bra{\D_N^k}$ is a mixture of symmetric Dicke states $\ket{\D_N^k}$ \cite{Dicke1954} with probabilities $p_k=2^{-N} \binom{N}{k}$, where $\sum_{k=0}^N p_k=1$. Recall first that 
\begin{equation}
H_G =\sum_{i=1}^N f_i  \frac{\sigma_z^{(i)}}{2}\ ,
\end{equation}
where we set for convenience $f_i\coloneqq x_i - x_0$.   We perform the calculations analogous to the ones given in Ref.~\cite{Altenburg2016}.  Because of the fact that $H_G$ preserves subspaces $\V_k$ and $\ket{\D_N^k} \in \V_k$ we have $\bk{\D_N^s|H_G }{\D_N^l}\propto \delta_{l,s}$
and therefore the QFI reduces to
\begin{align}
\QFI =4 (\gamma t)^2 \sum_{k=0}^N p_k \Delta_k^2 H_G,
\end{align}
with $\Delta_k^2 H_G$ being the variance of $H_G$ in the state $\ket{\D_N^k}$.
The second term of the variance is the squared expectation value $\braket{H_G}$, which is given by
\begin{align}
\begin{split}
\bk{\D_N^k|\sum_{i=1}^N f_i  \frac{\sigma_z^{(i)}}{2}}{\D_N^k}&=\sum_{i=1}^N f_i\bk{\D_N^k|  \frac{\sigma_z^{(i)}}{2}}{\D_N^k}\\&=\sum_{i=1}^N f_i \frac{1}{N}\bk{\D_N^k| J_z}{\D_N^k}\\&=\sum_{i=1}^N f_i\frac{ \left(2k-N\right)}{2N},
\end{split}
\end{align}
using the symmetry of the state.
The expectation value of the squared operator $\braket{H_G^2}$ is
\begin{align}
\begin{split}
\bk{\D_N^k|\left[\sum_{i=1}^N f_i  \frac{\sigma_z^{(i)}}{2}\right]^2}{\D_N^k}&=\sum_{i,j=1}^{N} f_i f_j \bk{\D_N^k| \frac{\sigma_z^{(i)}}{2} \frac{\sigma_z^{(j)}}{2}}{\D_N^k}\\&=\sum_{i=1}^N f_i^2\underbrace{\bk{\D_N^k|\left( \frac{\sigma_z^{(i)}}{2}\right)^2}{\D_N^k}}_{=1/4}\\&+ \sum_{i \neq j=1}^N f_if_j\bk{\D_N^k| \frac{\sigma_z^{(i)}}{2} \frac{\sigma_z^{(j)}}{2}}{\D_N^k}.
\end{split}
\end{align}
Using the symmetry of the state  $\bk{\D_N^k|\sigma_z^{(i)}\sigma_z^{(j)}}{\D_N^k}=\bk{\D_N^k|\sigma_z^{(a)}\sigma_z^{(b)}}{\D_N^k}$ for arbitrary $a$ and $b$, we can rewrite the second term
\begin{align}
\begin{split}
\bk{\D_N^k| \frac{\sigma_z^{(i)}}{2} \frac{\sigma_z^{(j)}}{2}}{\D_N^k}&=\frac{1}{ N(N-1)} \sum_{a \neq b=1}^N \bk{\D_N^k| \frac{\sigma_z^{(a)}}{2} \frac{\sigma_z^{(b)}}{2}}{\D_N^k}\\&=\frac{1}{N(N-1)} \left(\underbrace{\bk{\D_N^k|J_z^2}{\D_N^k}}_{=(2k-N)^2/4}-\frac{N}{4}\right).
\end{split}
\end{align}
Together the variance is given by
\begin{align}
\begin{split}
4\Delta_k^2 H_G &= \sum_{i=1}^N f_i^2-\left( \sum_{i=1}^N f_i\right)^2 \frac{\left(2k-N\right)^2}{ N^2}\\&+\sum_{i \neq j=1}^N f_if_j\left[\frac{(2k-N)^2-N}{N(N-1)}\right].\label{app:eq:QFI_Dicke}
\end{split}
\end{align}
Here, all terms with $x_0$ vanish. Therefore, $4\Delta_k^2 H_G$ is independent of $x_0$.
Using $\sum_{k=0}^N 2^{-N}\binom{N}{k} \left(2k-N\right)^2=N$, we can calculate the QFI
\begin{align}
\begin{split}
\QFI&= 4 (\gamma t)^2 \sum_{k=0}^N p_k \Delta_k^2 H_G\\&=(\gamma t)^2\left[\sum_i x_i^2-\frac{1}{N} \left( \sum_i x_i\right)^2\right].\label{app:eq:QFI_prod_steady}
\end{split}
\end{align}

For the maximization of Eq.~\eqref{app:eq:QFI_prod_steady} over the positioning $\{x_i\}$ we can state:
\begin{lem}
Let $N$ be an even natural number and let
\begin{equation}
f\left(x_1,x_2,\ldots,x_N\right)\coloneqq \sum_{i=1}^N x_i^2-\frac{1}{N} \left( \sum_{i=1}^N x_i\right)^2 \ .
\end{equation} 
Then, the restriction of $f$ to the domain $x_i \in[x_0, x_0 +L]$ attains the maximum value for $x_i=x_0$, for $i=1,\ldots,N/2$ and $x_i=x_0 +L$ for $i=N/2+1.\ldots,N$.
\end{lem}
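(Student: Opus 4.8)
The plan is to recognise $f$ as ($N$ times) the empirical variance of the points $\{x_i\}$: it is convex and translation invariant, so its maximum over the box is forced onto a vertex where each $x_i\in\{x_0,x_0+L\}$, after which the problem collapses to an elementary one-parameter integer optimisation. Since $f$ is symmetric in its arguments, it suffices to identify \emph{how many} coordinates sit at each endpoint.

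First I would exploit the convexity of $f$ in each coordinate separately. Fixing all variables except $x_j$ and writing $S_j\coloneqq\sum_{i\neq j}x_i$, the function becomes $x_j\mapsto\bigl(1-\tfrac1N\bigr)x_j^2-\tfrac{2}{N}S_j x_j+\mathrm{const}$, whose leading coefficient $1-\tfrac1N$ is strictly positive for $N\geq 2$. A strictly convex function on a closed interval attains its maximum only at an endpoint. Since $f$ is continuous on the compact box $[x_0,x_0+L]^N$, a global maximiser $x^\ast$ exists; by the coordinatewise observation applied to each $j$, every coordinate of $x^\ast$ must equal either $x_0$ or $x_0+L$. (Equivalently one may note that $f$ is jointly convex, its Hessian being $2(\I-\tfrac1N J)$ with $J$ the all-ones matrix, a positive-semidefinite projector, so the maximum over the polytope is attained at a vertex.)

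Next I would evaluate $f$ on these vertices. If exactly $k$ of the coordinates equal $x_0+L$ and the remaining $N-k$ equal $x_0$, direct substitution gives $\sum_i x_i=Nx_0+kL$ and $\sum_i x_i^2=Nx_0^2+2kx_0L+kL^2$, so that every $x_0$-dependent term cancels and $f=L^2\,k(N-k)/N$. The problem thus reduces to maximising the integer-valued quantity $k(N-k)$ over $k\in\{0,1,\ldots,N\}$. Writing $k(N-k)=\tfrac{N^2}{4}-\bigl(k-\tfrac N2\bigr)^2$, the maximum is at $k=N/2$, which is an integer precisely because $N$ is even; this yields the claimed optimiser with $N/2$ points at $x_0$ and $N/2$ at $x_0+L$ and maximal value $f=L^2N/4$, consistent with Eq.~\eqref{eq:qfi_p_steady_opt}.

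The only real obstacle is the first step, namely justifying that a maximiser is pushed onto the vertices of the box. The cleanest route is the strict-convexity-in-each-variable argument above, which needs only $N\geq 2$; I would take care to emphasise that it is \emph{strict} convexity (rather than mere convexity) that rules out interior maxima, and that compactness of the domain guarantees the maximum is actually attained. Everything after that is a finite, explicit computation.
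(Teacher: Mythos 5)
Your proof is correct, but it follows a genuinely different route from the paper's. The paper's argument exploits translation invariance: it shifts the box to the symmetric interval $[-L/2,L/2]^N$ and then observes that the claimed configuration \emph{simultaneously} maximizes the term $\sum_{i=1}^N x_i^2$ (each coordinate has maximal modulus) and minimizes the term $\bigl(\sum_{i=1}^N x_i\bigr)^2$ (the sum vanishes because $N$ is even), so the difference is maximized with no further work. You instead use coordinatewise strict convexity (equivalently, joint convexity with Hessian $2(\I-\tfrac1N J)$) to force any maximizer onto a vertex of the box, and then reduce to the scalar optimization of $L^2\,k(N-k)/N$ over $k\in\{0,\ldots,N\}$. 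Your route is slightly longer but buys two things the paper's does not: it yields the value of $f$ at \emph{every} vertex, and it extends verbatim to odd $N$, where the maximum of $k(N-k)$ is attained at $k=\lfloor N/2\rfloor$, giving $f=L^2\lfloor N/2\rfloor\lceil N/2\rceil/N$; the paper's simultaneous-optimization trick genuinely needs $N$ even, since for odd $N$ no vertex configuration makes $\sum_i x_i$ vanish, and the two terms can no longer be optimized at a common point. Conversely, the paper's proof is shorter and avoids the vertex-reduction step entirely. One small presentational note: in your opening you attribute the vertex reduction to convexity \emph{and} translation invariance, but translation invariance plays no role in your actual argument (it only manifests itself in the cancellation of the $x_0$-dependent terms in the vertex evaluation); the convexity argument alone carries that step.
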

\begin{proof}
A direct computation shows that for any $\delta\in\mathbb{R}$ we have
\begin{equation}
f\left(x_1+\delta,x_2+\delta,\ldots,x_N+\delta\right)=f\left(x_1,x_2,\ldots,x_N\right) \ .
\end{equation}
Therefore, the problem of maximizing $f$ in the domain specified by the restrictions $x_i \in[x_0, x_0 +L]$ can be reduced to the problem of maximizing this function for its arguments satisfying $x_i \in[-L/2, L/2]$. For such  restrictions setting half of $x_i$ equal to $-L/2$ and the other half equal to $L/2$ maximizes $\sum_{i=1}^N x_i^2$ while at the same time minimizing $\left( \sum_{i=1}^N x_i\right)^2$. Thus, such configuration maximizes $f$ for $x_i \in[-L/2, L/2]$. Coming back to the original interval we get the thesis of the lemma.
\end{proof}

\section{Bounds in precision for gradient estimation with no \textit{a priori} knowledge about $B_0$} \label{app:presicion_bounds_not_knowing_b}

In this Section we derive the bounds in precision for gradient estimation under the assumption of no \textit{a priori} knowledge about the offset field $B_0$. In particular, we will prove Eq.~\eqref{eq:sequentialreduction}, Eq.~\eqref{eq:maxfischsubspace}, Eq.~\eqref{eq:optstatesubspace}, and Eq.~\eqref{eq:optimalDFS}  given in Section~\ref{sec:no_a_priori_knowledge}.

Let us start with proving Eq.  \eqref{eq:sequentialreduction} which reads 
\begin{equation}\label{eq:sequentialreduction2}
\max_{\state \in \DFS_N} \QFI(\state) = (\gamma t)^2 \max_{k=0,\ldots,N}\, \max_{\state\in\D(\mathcal{V}_k)} \QFI[\state,H_G] \,.
\end{equation}
In order to prove this equation we first recall the connection $\QFI(\state)=(\gamma t)^2 \QFI[\state,H_G]$. Then, for the "Hamiltonian" quantum Fisher information we have
\begin{equation}\label{eq:sumQFI}
\QFI\left[\sum_{k=0}^N p_k\state_k,H_G\right]=  \sum_{k=0}^N p_k  \QFI[\state_k,H_G]\ ,
\end{equation}  
where $\lbrace p_k \rbrace$ is a probability distribution and states $\state_k$ are supported on $\V_k$. Eq.~\eqref{eq:sumQFI} follows from the fact that $H_G$ preserves decoherence-free subspaces $\V_k$ and the "additivity" QFI under the convex combinations of states supported on orthogonal subspaces \cite{Toth2014}. The identity in Eq.~\eqref{eq:sequentialreduction2} follows now from the linearity of the right-hand side of Eq.~\eqref{eq:sumQFI} in $\lbrace p_k\rbrace$ and the fact that decoherence-free states are precisely of the form $\sum_{k=0}^N p_k\state_k$ for $\state_k$ supported on $\V_k$.

The optimal value of the QFI on $\D(\V_k)$ can be found by using   Eq.~\eqref{eq:maxfisch} and Eq.~\eqref{eq:optSTATE}. Let $\lambda_{\max}^{(k)}$ and $\lambda_{\min}^{(k)}$ denote the maximal and respectively minimal eigenvalues of $\left.H_G\right|_{\mathcal{V}_{k}}$ [the formula for $H_G$ is given for example in Eq.~\eqref{eq:HGham}]. Using the monotonicity of the coefficients $f_{i+1}\ge f_{i}$ with $f_i=(x_i-x_0)$ we get
\begin{align}
\lambda_{\max}^{\left(k\right)} & =\sum_{i=k+1}^{N}f_{i}-\sum_{i=1}^{k}f_{i}\,,\label{eq:restricted optimal}\\
\lambda_{\min}^{(k)} & =\sum_{i=1}^{N-k}f_{i}-\sum_{i=N-k+1}^{N}f_{i}\,.\label{eq:resticted minimal}
\end{align}
The corresponding eigenvectors are given by 
\begin{align}
\ket{I_{\max}^{(k)}} & =\left(\ket 1^{\otimes k}\right)\otimes\left(\ket 0^{\otimes N-k}\right)\,,\label{eq:explicitformvec}\\
\ket{I_{\min}^{(k)}} & =\left(\ket 0^{\otimes N-k}\right)\otimes\left(\ket 1^{\otimes k}\right)\,.\nonumber 
\end{align}
Using Eq.~\eqref{eq:restricted optimal} and Eq.~\eqref{eq:resticted minimal}
we get 
\begin{equation}
\lambda_{\max}^{(k)}-\lambda_{\min}^{(k)}=\sum_{i=1}^{l}\left(f_i-f_{N-i+1}\right)\,,\,l=\min\{k,N-k\}\,.\label{eq:partial explicit}
\end{equation}
From Eq.~\eqref{eq:simpl QFI} and Eq.~\eqref{eq:partial explicit} we obtain the explicit formula for
the maximal QFI on $\mathcal{V}_{k}$,
\begin{equation}\label{app:eq:maxfischsubspace}
\max_{\state\in \D(\V_k)}\QFI(\state) =(\gamma t)^2\left[\sum_{i=1}^{l}\left(f_i-f_{N-i+1}\right)\right]^{2}\ ,
\end{equation}
where $l=\min\{k,N-k\}$. From Eq.~\eqref{eq:optSTATE} we find that the above value is attained for the state 
\begin{equation}\label{app:eq:optstatesubspace}
\ket{\ODF_k}=\frac{1}{\sqrt{2}}\left(\ket{I_{\max}^{(k)}}+\ket{I_{\min}^{(k)}}\right) 
\end{equation}
We have therefore proved Eq.~\eqref{eq:maxfischsubspace} and Eq.~\eqref{eq:optstatesubspace}. 

We conclude by noting that the right hand side of Eq.~\eqref{app:eq:maxfischsubspace} is a monotonic function in $k$ for $2k\leq N$ and $\max_{\state\in \D(\V_k)}\QFI(\state) = \max_{\state\in \D(\V_{N-k})}\QFI(\state)$ -- see Fig.~\ref{fig:particles_sums} for a graphical explanation of this fact. Therefore, the QFI is maximal for $k=\left\lfloor \frac{N}{2}\right\rfloor$, where $\lfloor n \rfloor$ is the smallest integer smaller or equal to $n$ that is called the floor of $n$. Using Eq.~\eqref{eq:sequentialreduction2}  we get
\begin{equation}
\max_{\state \in \DFS_N} \QFI(\state)=(\gamma t)^2\left[\sum_{i=1}^{\left\lfloor \frac{N}{2}\right\rfloor }\left(x_i-x_{N-i+1}\right)\right]^{2}\,,\label{app:eq:max Fischer}
\end{equation}
This
maximum is obtained for the state 
\begin{equation}
\ket{\ODF_{\left\lfloor \frac{N}{2}\right\rfloor}}=\frac{1}{\sqrt{2}}\left[\ket 1^{\otimes \left\lfloor \frac{N}{2}\right\rfloor }\ket 0^{\otimes \left\lceil \frac{N}{2}\right\rceil }+\ket 0^{\otimes \left\lceil \frac{N}{2}\right\rceil }\ket 1^{\otimes \left\lfloor \frac{N}{2}\right\rfloor }\right]\,,\label{eq:optimaldecohstatenice}
\end{equation}
where $\lceil n \rceil$ is the highest integer greater or equal to $n$ that is called the ceil of $n$.
Let us note that if $N$ is not even also the state 

\begin{equation}
\ket{\ODF_{\left\lceil \frac{N}{2}\right\rceil}}=\frac{1}{\sqrt{2}}\left(\ket{I_{\max}^{\left\lceil \frac{N}{2}\right\rceil}}+\ket{I_{\min}^{\left\lceil \frac{N}{2}\right\rceil}}\right)\label{eq:alternative opt decoh}
\end{equation}
attains the maximal QFI given in Eq.~\eqref{app:eq:max Fischer}.
\begin{figure}
\begin{center}
	\includegraphics[width=0.4\textwidth]{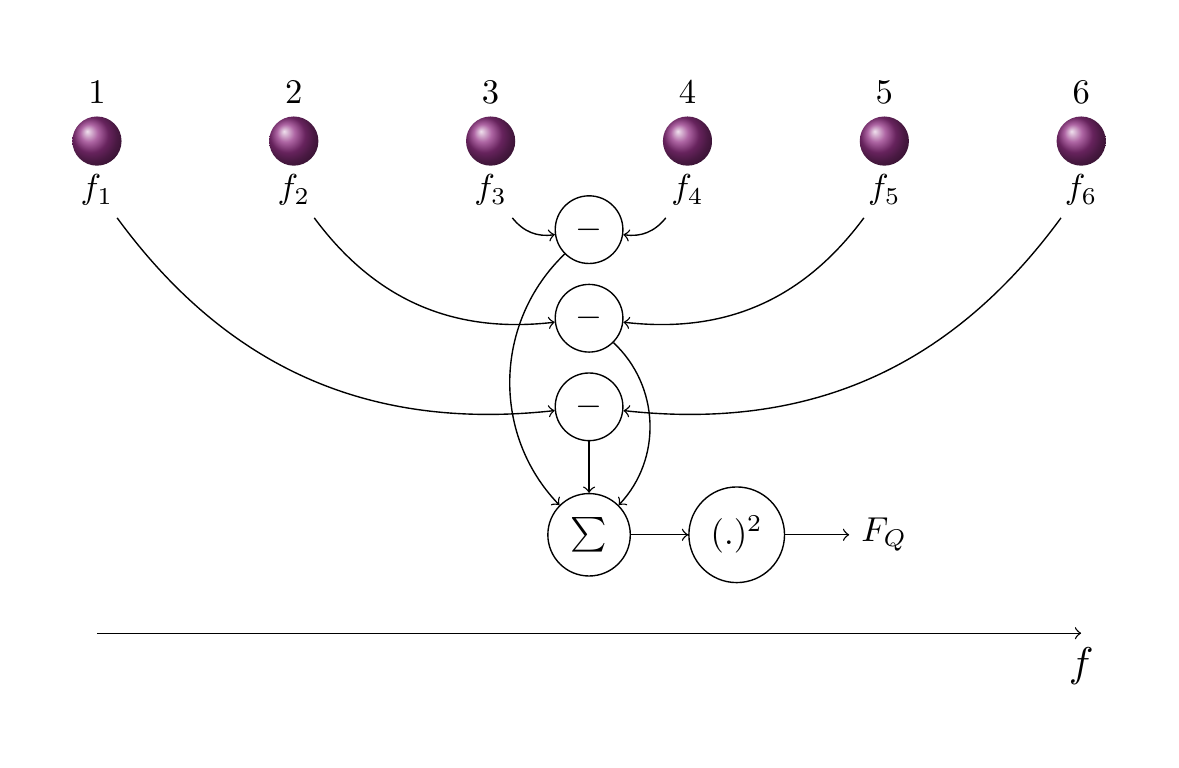}
\end{center}
	\caption{Graphical illustration of Eq.~\eqref{app:eq:maxfischsubspace} for $N=6$ qubits.
	First, $f_{N-i+1}$ is subtracted from $f_i$. These differences will be summed up. Then, the QFI is given by the square of the sum.
	For $k\le N/2$ the number of terms in the summation increases with $k$ and decreases with $k$ for $k>N/2$. This is due to the fact, that the summation cutoff is given by $l=\min \{k,N-k\}$. In total, $\max_{\state\in \D(\V_k)}\QFI(\state)$ is a monotonic function of $k$ for $k\leq N/2$. 
	}\label{fig:particles_sums}
\end{figure}

\subsection{Optimal positioning of the qubits}
Consider $N$ particles that are set to be located in the interval
$x_i\in\left[\tilde x_0,L+\tilde x_0\right]$. Here, $\tilde x_0$ is an arbitrary point. We are interested in how to optimally locate
the qubits in order to get the best possible accuracy for the estimation of $G$
(for fixed $N$ and $L$). According
to Eq.~\eqref{app:eq:maxfischsubspace} the maximal QFI, attainable with a state from the $\DFS_N$ is given by 
\begin{equation}
\QFI\left(\state\right)=(\gamma t)^2\left[\sum_{i=1}^{\left\lfloor \frac{N}{2}\right\rfloor }\left(x_i-x_{N-i+1}\right)\right]^{2}\,.\label{eq:optimfisheven}
\end{equation}
The maximum of \eqref{eq:optimfisheven} over the locations of all particles
$\left\{ x_{i}\right\} _{i=1}^{N}$  is attained for $x_i=\tilde x_0$ for $i\le \left\lfloor \frac{N}{2}\right\rfloor$ and $x_i=L+\tilde x_0$ for $i> \left\lfloor \frac{N}{2}\right\rfloor$ or vice versa.
Then, the maximal QFI is given by
\begin{equation}
\QFI=(\gamma t)^2\left\lfloor \frac{N}{2}\right\rfloor ^{2}L^{2}\,.\label{eq:max decoh fiisher location2}
\end{equation}

Let us note that in the case when $N$ is odd the position of the
``middle'' particle can be arbitrary. We want to emphasize that
the scaling behavior (with respect to $N$ and $L$) of the (over the choice of $x_{i}$'s)  optimized
QFI is preserved if one picks
the optimal state which is invariant to the considered noise model. 
\subsection{Crosspoint GHZ in presence of noise and optimal state from the DFS}\label{app:critical_time}
In the noiseless case the GHZ state is optimal for gradient estimation when $B_0$ is known. However, collective phase noise causes an erasure of knowledge about the offset field $B_0$. In the limit of no knowledge about $B_0$ we found an optimal state from the $\DFS_N$ given in Eq.~\eqref{app:eq:optstatesubspace} with $k=\lfloor N/2 \rfloor$. In total the maximal attainable QFI for this state is smaller then for the GHZ state in the noiseless case. Therefore, in this section we calculate the measurement time $t_\mathrm{crit}$ in which both perform similar.
The QFI for GHZ states in presence of collective phase noise is given by
\begin{equation}
\QFI=  d(t)^2 (\gamma t)^2 \left[\sum_{i=1}^N (x_i-x_0)\right]^2\, ,
\end{equation}
with $d(t)=\mathrm{exp}\left[-\left(N\gamma' \Delta E \tau_c\right)^2 \left(\exp(-t/\tau_c)+t/\tau_c -1\right)\right]$.
The QFI for the optimal state from the $\DFS_N$ with $k=\lfloor N/2 \rfloor$ is given by
\begin{equation}
\QFI=(\gamma t)^2\left[\sum_{i=1}^{\left\lfloor \frac{N}{2}\right\rfloor }\left(x_{N-i+1}-x_{i}\right)\right]^{2}\,.
\end{equation}
Then, we can calculate the critical time $t_\mathrm{crit}$ by setting both equal and solve for $t$.
In realistic experiments the correlation time $\tau_c \propto\,$s and the measurement time $t\propto\,$ms. Such that we can assume $t/\tau_c \ll 1$ which leads to $[\exp(-t/\tau_c)+t/\tau_c -1]\approx 1/2(t/ \tau_c)^2$ and we find
\begin{equation}
t_\mathrm{crit}=\frac{\left\{ 2\log\left[ \frac{\left(\sum_{i=1}^N (x_i-x_0)\right)^2}{\left(\sum_{i=1}^{N/2}\left(x_i-x_{N-i+1}\right)\right) ^{2}}\right]\right\}^{1/2}}{N\gamma' \Delta E}\,.
\end{equation}

\section{Spatial distributions used in Fig.~\ref{fig:max_qfi_vs_k} }\label{app:spatial_distributions}
In Fig.~\ref{fig:max_qfi_vs_k} we illustrated the QFI with a state from the $\DFS_N$ with $k$ excitations for different kinds of spatial distributions of the qubits. For these, we used the following functions:
 The optimal spatial distribution for the positioning of the qubits is marked in black in Fig.~\ref{fig:max_qfi_vs_k} and is given by
\begin{align}
x_i=\begin{cases}
0 & \mathrm{for}\,\,\,i\le \left\lfloor N/2 \right\rfloor\,,\\
L & \mathrm{for}\,\,\,i > \left\lfloor N/2 \right\rfloor\,.
\end{cases}
\end{align}
The spatial distribution marked by the color darker grey (purple), is given by
\begin{equation}
x_i=\frac{L}{2}\left\{1+ \tanh\left[\left(\frac{2i}{L}-1\right) \pi\right]\right\}\,.
\end{equation}
The equidistant spatial distribution marked in grey (red) is given by
\begin{equation}
x_i=(i-1)\frac{L}{N-1}
\end{equation}
and the spatial distribution marked by lighter grey (yellow), is given by
\begin{equation}
x_i=\frac{L}{2}\left\{1+ \tan\left[\left(\frac{2i}{L}-1\right) \frac{\pi}{4}\right]\right\}\,.
\end{equation}

\end{appendix}

\end{document}